\title{Fair Interventions in Weighted Congestion Games}
\author{Miriam Fischer\thanks{Department of Computing, Imperial College London, UK; {\tt m.fischer21@imperial.ac.uk}.} \and Martin Gairing\thanks{Department of Computer Science, University of Liverpool, UK; {\tt gairing@liverpool.ac.uk}.} \and Dario Paccagnan\thanks{Department of Computing, Imperial College London, UK; {\tt d.paccagnan@imperial.ac.uk}.}}
\date{}
\definecolor{colourabc}{cmyk}{.9,.5,0,.35}
\newtheoremstyle{mystyle}%
{5pt}%
{3pt}%
{}%
{}%
{\bfseries \color{colourabc}}%
{.}%
{.5em}%
{}%
\theoremstyle{mystyle}
\renewcommand{\paragraph}{%
  \@startsection{paragraph}{4}%
  {\z@}{1.25ex \@plus 0.5ex \@minus .2ex}{-0.5em}%
  {\normalfont\normalsize\bfseries}%
}
\renewcommand{\paragraph}{%
  \@startsection{paragraph}{4}%
  {\z@}{1.25ex \@plus 0.5ex \@minus .2ex}{-0.5em}%
  {\sffamily \mdseries \color{colourabc}}%
}
\newtheorem{claim}{Claim}
\newtheorem{lemma}{Lemma}
\newtheorem{theorem}{Theorem}
\newtheorem{corollary}{Corollary}
\newtheorem{sassumption}{Assumption} %
\newcommand{\poa}{\mathrm{PoA}}
\newcommand{\be}{\begin{equation}}
\newcommand{\ee}{\end{equation}}
\newcommand{\systemcost}{social cost\xspace}%
\newcommand{\socialcost}{social cost\xspace}%
\newcommand{\mc}[1]{\mathcal{#1}}
\newcommand{\mb}[1]{\mathbb{#1}}
\newcommand{\PNEcost}{{\tt{NECost}}}
\newcommand{\mincost}{{\tt{MinCost}}}
\newcommand{\SC}{{\rm{SC}}}
\newcommand{\C}{{\rm{C}}}
\newcommand{\opt}[1]{{#1}^{\rm opt}}
\newcommand{\NE}[1]{{#1}^{\rm ne}}
\newcommand{\NP}{{\sf NP}}
\newcommand{\Pclass}{{\sf P}}
\pgfplotsset{compat=1.15}
\newlength{\mynegspace}
\newcommand{\mygezero}{>0}
\newcommand{\mincon}{\texttt{MinSC}}
\renewcommand{\star}{\ast}
\newcommand{\mya}[2]{\alpha_{#1}^{#2}}
\newcommand{\myb}{\beta}
\newcommand{\ca}{b}
\newcommand{\tildemya}[2]{A_{#1}^{#2}}
\renewcommand{\d}{d} %
\renewcommand{\k}{k} %
\newcommand{\D}{D}   %
\renewcommand{\j}{j} %
\newcommand{\n}{n}   %
\newcommand{\p}{p}   %
\renewcommand{\i}{i}   %
\newcommand{\mydelta}{\Delta} 
\newcommand{\M}{\mathcal{M}}
\tikzset{cross/.style={cross out, draw, 
         minimum size=2*(#1-\pgflinewidth), 
         inner sep=0pt, outer sep=0pt}}
\definecolor{pnasblueback}{RGB}{205,217,235}
\begin{document} 
\maketitle
\thispagestyle{empty}
\begin{abstract}
In this work we study the power and limitations of fair interventions in weighted congestion games. Specifically, we focus on interventions that aim at improving the equilibrium quality (price of anarchy) and are fair in a suitably defined sense.
Within this setting, we provide three key contributions.
First, we show that no fair intervention can reduce the price of anarchy below a given factor depending solely on the class of latencies considered. Interestingly, this lower bound is unconditional, i.e., it applies regardless of how much computation interventions are allowed to use.
Second, we design a taxation mechanism that is fair and achieves a price of anarchy matching this unconditional lower bound, all the while being polynomial-time computable.
Third, we show that no intervention (fair or not) can achieve a better approximation if polynomial computability is required.  We do so by proving that the minimum social cost is \NP-hard to minimize below a factor identical to the one previously introduced. In doing so, our work shows that the algorithm proposed by Makarychev and Sviridenko \cite{Makarychev18} to tackle optimization problems with a ``diseconomy of scale'' is optimal, and provide a novel way to derandomize its solution via equilibrium computation.
\end{abstract}

\maketitle

\newpage
\pagenumbering{arabic} 
\section{Introduction}
The overall performance of many distributed systems often depends on the choices made by myopic users, who take decision based on their individual interests and without regard to the overall performance. In this context, the presence of selfish users often leads to suboptimal outcomes and the celebrated notion of \emph{price of anarchy} \cite{koutsoupias1999worst} %
constitutes a commonly employed metric to quantify the resulting performance degradation. 

Against this backdrop, (weighted) congestion games \cite{rosenthal1973class} have played a staring role in the development of a rich body of literature primarily focusing on quantifying the price of anarchy, thanks to which we now have a complete characterization of this metric. 
However, within this setting, much less is known regarding the ability of \emph{interventions} -- a means for the system designer to modify the users' perceived costs -- to improve the equilibrium quality. {Indeed, although a multitude of approaches have been proposed}, including coordination mechanisms \cite{ChristodoulouKN09}, Stackelberg strategies \cite{Fotakis10a}, taxation mechanisms \cite{caragiannis2010taxes,BiloV19,paccagnan2021optimal}, signaling \cite{BhaskarCKS16,HoeferKKD22}, network design techniques \cite{GairingHK17,Roughgarden06}, cost-sharing methods \cite{GairingKK20,GkatzelisKR14}, it is still unclear how powerful interventions are in reducing the price of anarchy, especially when polynomial computability is required.

Concurrently, there has also been a surge of interest in developing algorithms that are \emph{fair} in the sense that they do not discriminate between players with similar or identical attributes, with examples ranging from classification \cite{dwork2012indirect} to online decision-making \cite{gupta2021implication} and allocation of indivisible items \cite{aziz2022equilibria}. It is therefore natural to study how effective \emph{fair interventions} (see \cref{subsec:weighted_intro} for a definition) are in reducing the price of anarchy. In particular, it is interesting to study whether the fairness constraint imposes any additional limitation on the achievable price of anarchy, and how such additional requirement intertwines with the computational complexity of the resulting interventions.

\smallskip
These questions can be summarized as follows:
\begin{itemize}[leftmargin=5.5mm]
	\item {\it How effective are fair interventions in reducing the price of anarchy?}
	\item {\it \mbox{Do fairness requirements limit the ability of interventions to improve the price of anarchy?}}
\end{itemize}  

\smallskip
In this paper we settle these questions for the class of weighted congestion games. We do so both in the setting where no limitation is imposed on the intervention's computational requirement, and in the case where polynomial computability is desired. Perhaps interestingly, our work shows that when efficient computation is needed, fair interventions are as a powerful mean to reduce the price of anarchy as any other. 

\subsection{Our Contributions and Their Implications}
\label{subsec:contributions}
Three major technical contributions substantiate and expand upon the previous claims:
\begin{itemize}[leftmargin=5.5mm]
\item[C1)] {\bf Unconditional Lower Bound for Fair Interventions.} We give a lower bound on the achievable price of anarchy of fair interventions that is independent of their computational complexity, but depends solely on the class of latency functions employed (\textit{\cref{th:approximation}}). Stated differently, we show that no fair intervention can achieve a price of anarchy below this bound. %
To the best of our knowledge, our work is the first to analyze unconditional lower bounds for fair interventions.

\item[C2)] \textbf{Optimal and Efficiently-computable Fair Interventions%
.} 
We provide a framework to design fair interventions, 
in the form of taxation mechanisms, whose price of anarchy matches the lower bound previously introduced, and show that such interventions are polynomially-computable (\textit{\cref{thm:opttaxes}}).
Further, for the widely studied class of polynomial latency functions, we provide explicit expressions of the optimal mechanism (\textit{\cref{thm:recursionpolynomials}}). Even without fairness constraints, the problem of designing polynomially-computable interventions to reduce the price of anarchy has not been previously studied beyond linear latency functions~\cite{caragiannis2010taxes}.

\item[C3)] \textbf{Limitations of Unfair Interventions.} We complement the previous results by showing that no intervention, regardless of whether it is fair or not, can reduce the price of anarchy below the factor previously derived, if polynomial computability is required. We prove this result by showing that the problem of minimizing the social cost is \NP-hard to approximate below a factor identical to that already obtained (\textit{\cref{thm:hardness}}). No tight hardness of approximation was known for weighted congestion games.%
\end{itemize}
\smallskip

\noindent Taken together these contributions show that: 
i) The proposed taxation mechanisms are optimal amongst all fair intervention in the sense that no fair intervention can induce a better price of anarchy, regardless of its computational complexity;
ii) The proposed taxation mechanisms are also optimal within all polynomially-computable interventions, fair or not, and more generally also against any polynomially-computable algorithm. 
This is remarkable as, prior to this work, it was unclear whether (fair) interventions were a sufficiently powerful tool to achieve optimal approximation in weighted congestion games. Our paper answers this question in the positive.

Finally, although the following observations are direct consequences of the previous contributions, we highlight them separately, as we believe they may be of independent interest.

\begin{itemize}[leftmargin=5.5mm]
\item[C4)] {\bf Optimal Approximation, and Matching Hardness.} Makarychev and Sviridenko \cite{Makarychev18} study approximation algorithms for optimization problems with a ``diseconomy of scale'', to which weighted congestion games belong. Interestingly, the hardness factor we obtain in C3) matches the approximation factor derived by \cite{Makarychev18} and thus fully settles the approximability of this important class of problems. Finally, we note that by combining our results in C2) with existing algorithms for the computation of correlated equilibria (e.g., no-regret dynamics), one obtains a polynomial-time algorithm to minimize the social cost that achieves the best possible approximation (\textit{\cref{cor:polyalgo}}). This approach provides an alternative to the randomized rounding techniques proposed by Makarychev and Sviridenko \cite{Makarychev18}.
\end{itemize}
\smallskip

\smallskip
\noindent 
Three final comments are in order.
First, note that our unconditional lower bound only requires a very mild notion of fairness (see \cref{subseclabel:fairness}, Interventions \& Fairness for details on its definition). This makes the unconditional lower bound very strong, as the impossibility result we derive holds over a large class of interventions, and strengthening the notion of fairness can only worsen the lower bound. At the same time, we note that the taxation mechanism we present satisfies a much stronger fairness notions, as players using the same resource are treated equally even if they have different weights. 
Second, we observe that the price of anarchy obtained by our taxation mechanism holds both for mixed Nash equilibria and correlated/coarse correlated equilibria. This extension is significant as it gives performance bounds that apply also when players revise their action and achieve low regret \cite{roughgarden2015intrinsic}.
Third, our work also settles the question of whether approximating the minimum social cost is strictly harder in weighted congestion games compared to their unweighted variant%
. Perhaps interestingly, our work shows that the two problems are equally difficult to approximate only when the class of latency functions considered is closed under \textit{abscissa scaling} (\cref{lem:factor-weighted-unweighted}), which includes the important class of polynomial latency functions.\footnote{A class of latency functions $\mc{L}$ is closed under abscissa scaling if, for any function $\ell \in \mc{L}$ and $\alpha \geq 0$, the function $h$ with $h(x)=\ell(\alpha x)$ is in $\mc{L}$.} 
\subsection{Techniques} 
Three separate sets of techniques are needed to prove our results.

The lower bound for fair interventions is shown using, at its core, a symmetry argument. Specifically, the lower bound is obtained by exhibiting an instance of weighted congestion games where any fair intervention induces high cost at the worst mixed Nash equilibrium, relative to the optimal allocation. Interestingly, the instance utilized consists of a network with parallel links and identical players. In these settings, fair interventions maintain the symmetry of the game owing to their definition, so that a symmetric mixed Nash equilibrium must exist \cite{nashgames}. One then concludes by showing that any symmetric mixed profile, whether or not an equilibrium, has the desired high social cost.

Taxation mechanisms are a natural type of intervention often considered to reduce the price of anarchy, and their \emph{fairness} follows readily from their definition. Contrary to that, providing a recipe to design taxation mechanisms that are both efficiently computable \emph{and} achieve the desired approximation is significantly more challenging. We achieve this goal following a two-pronged approach. 
First, we compute a fractional solution to a Linear Programming (LP) relaxation for the problem of minimizing the social cost, and then design taxes which, using the solution as input, make sure that the resulting equilibria achieve the desired approximation. 

In more detail, we leverage the LP relaxation of \cite{Makarychev18} to minimizing social cost in combinatorial optimization problems with a diseconomy of scale. While this relaxation has an exponential number of decision variables (for each resource there is a variable for each subset of players), the resulting LP can be solved in polynomial time to arbitrary precision using the ellipsoid method. We adapt this relaxation to weighted congestion games, and show how its (non-integral) solution can be used as input to our taxation mechanism, which defines taxes for each resource.
Using critical inequalities based on stochastic orders of sums of Poisson random variables, we prove that this taxation mechanism enforces equilibria with the desired price of anarchy. We note that this is challenging, as the LP relaxation of \cite{Makarychev18} was designed to compute an allocation with low social cost in a \emph{centralized} setting, and not in the game-theoretic setting we consider here. Our result thus shows that, whether agents are strategic or not, it is possible to achieve the same optimal approximation.

For polynomial latency functions, we present closed-form expressions of the taxation mechanism in \cref{sec:poly}. We do so by showing that the moments of a weighted sum of independent Poisson random variables can be computed through a binomial-based recursion, which might be of independent interest. The proof heavily relies on moment generating functions and generalizes a result by Gould and Quaintance \cite[Eq 4.2]{gould2007linear}.

The result regarding hardness of approximation is obtained through an approximation preserving reduction \cite{AusielloPashos2014} from the corresponding problem for unweighted congestion games \cite{paccagnan2021optimal}. Although weighted congestion games are significantly more expressive than their unweighted variant, we are able to apply such approximation preserving reduction owing to following crucial observation: weighted congestion games where players have \emph{identical} weights are sufficient to achieve the largest possible inapproximability factor, once again highlighting the \mbox{importance of symmetry}. 

\medskip
Finally, we note that this paper differs significantly from recent work on the design of taxes for \emph{unweighted} congestion games \cite{paccagnan2021optimal}. 
First, the authors of \cite{paccagnan2021optimal} do not consider fairness constraints and therefore do not study how such requirement impacts the achievable performance, e.g., the unconditional lower bound in \cref{th:approximation}.
Second, their work is limited to the unweighted setting, and, perhaps unsurprisingly, their techniques do not carry over to the setting considered here. For example, in the unweighted case, they obtain taxation mechanisms through the solution of a minimization problem where the cost on each resource is replaced by a Poisson approximant. While this problem can be solved in polynomial time for the unweighted case, it can not be used here as it gives rise to a non-convex optimization problem for the weighted case.
Third, obtaining explicit expression for optimal taxation mechanisms requires new ideas, even for the widely studied class of polynomial latencies. For example, in \cref{lem:degree0equal}, we show how generalized Bell numbers can be constructed by summing a sequence of numbers (different from the commonly-employed Stirling numbers of the second kind) naturally arising in combinatorics.
Lastly, as we will show, the \NP-hardness factor for minimizing the social cost is also different in the weighted and unweighted settings. 

\subsection{Related Work}
\paragraph*{Price of Anarchy}
Since its introduction \cite{koutsoupias1999worst}, the \emph{price of anarchy} has been extensively studied in various setting, and for both unweighted and weighted congestion games we now have a rich theory. This includes exact bounds for congestion games with linear \cite{awerbuch2005price,christodoulou2005price}  and 
polynomial \cite{aland2006exact} latency functions.  These results led to the development of Roughgarden's \emph{smoothness framework} \cite{roughgarden2015intrinsic}, which elegantly distilled and formulated previous ideas for bounding the price of anarchy in a unified framework. Roughgarden \cite{roughgarden2015intrinsic} also showed that upper bounds achieved through the smoothness framework extend all the way to \emph{coarse correlated equilibria}, which can be computed by \emph{no-regret} algorithms. Bhawalkar et al.~\cite{bhawalkar2014smoothness} extended those results to weighted congestion games. 

\paragraph*{Fairness} 
While fairness notions have been widely studied, one of the most commonly employed is that of \textit{individual fairness} \cite{dwork2012indirect}, which originated in the design of classification algorithms. Such notion requires that individuals with similar attributes receive similar outcomes. %
In this respect, our notion of fair interventions (see \cref{subsec:weighted_intro} for a definition) 
is closely aligned to the former, with the crucial difference that we do not define similarity as a continuous metric, but rather only with regards to identical players. In this respect, our notion of fairness imposes weaker constraints than that of individual fairness, resulting in a very strong unconditional lower bound. Indeed, many of the interventions studied in the literature implicitly satisfy our fairness definition and, therefore, must obey our unconditional lower bound. These include cost-sharing mechanisms \cite{GairingKK20,GkatzelisKR14}, public signaling in singleton congestion games \cite{zhou2022algorithmic}, cost-balancing tolls in atomic symmetric network congestion games \cite{fotakis2008cost} and taxation mechanisms by \cite{caragiannis2010taxes,BiloV19,paccagnan2021atomic}. At the same time, we observe that the interventions we propose satisfy a much stronger fairness notion as players selecting the same resource are treated equally, even if they have different weights.

Finally, we observe that, within the realm of congestion games, the fairness of \emph{allocations} has been studied for both the non-atomic \cite{correa2007networks,roughgarden2002unfair} and the atomic  \cite{ChakrabartyMN05,kleinberg1999fairness} variants of congestion games, though with significant differences. Most notably, these works focus on the fairness of a given allocation, and not on how that allocation can be induced by fair interventions.

\paragraph*{Interventions \& Taxation Mechanisms}
Different approaches, such as coordination mechanisms \cite{ChristodoulouKN09}, Stackelberg strategies \cite{Fotakis10a}, taxation mechanisms \cite{caragiannis2010taxes}, signaling \cite{BhaskarCKS16}, cost-sharing strategies \cite{GairingKK20,GkatzelisKR14}, have been proposed to cope with the performance degradation associated to selfish decision making.
Amongst them, taxation mechanisms have attracted significant attention thanks to their ability to indirectly influence the resulting system performance. Their design was initiated by Caragiannis et al.~\cite{caragiannis2010taxes} for linear (weighted) congestion games, and further studied by \cite{BiloV19, paccagnan2019incentivizing} in the polynomial setting. Limitedly to linear congestion games, Caragiannis et al.~\cite{caragiannis2010taxes} provide efficiently-computable taxes whose price of anarchy matches the one we achieve. 
For polynomial congestion games, Bil\`o and Vinci \cite{BiloV19} also propose taxation mechanisms with a price of anarchy matching ours. However, there is a fundamental difference: the taxes \cite{BiloV19} proposes require knowledge of an \emph{optimal} allocation, and therefore can not be efficiently computed, as minimizing the social cost is \NP-hard. In contrast, our taxation mechanism achieves the same price of anarchy whilst also being \emph{efficiently} computable.

\paragraph*{Computational Lower Bounds}
Meyers and Schulz \cite{MeyersS12} show that minimizing the social cost in unweighted congestion games is \NP-hard to approximate for general convex latency functions, while Roughgarden~\cite{Rough_barrier} showed that computational lower bounds on minimizing the social cost translate to lower bounds on the price of anarchy. Specifically, for polynomial latency functions of maximum degree $D$ and non-negative coefficients, he provided the first lower bound, parametrized by $D$, on approximating \systemcost.
For unweighted congestion games, tight hardness factors for approximating the minimum social cost have recently been given by Paccagnan and Gairing \cite{paccagnan2021optimal} for general latency functions. No tight hardness factor was known for weighted congestion games. 

\paragraph*{Unconditional Lower Bounds}
The aforementioned lower bounds are conditional on polynomial-time computability and do not consider fairness constraints. Unconditional lower bounds have been obtained for Stackelberg strategies in unweighted congestion games with linear latencies \cite{Fotakis10a, bilo2019stackelberg} and, in weighted congestion games, for cost-sharing mechanisms \cite{kollias2015equilibria, GairingKK20} and taxes for linear latency functions \cite{caragiannis2010taxes}. However, these works restrict to specific classes of interventions, and \textit{do not} consider fairness. Our work significantly departs from these as it provides - for general latency functions - unconditional lower bounds for \textit{any} intervention that is \textit{fair}. For linear latency functions, we recover the unconditional lower bound on taxes by Caragiannis et al.~\cite{caragiannis2010taxes}, and thus provide a fairness interpretation of their work.

\subsection{Weighted Congestion Games and Taxation Mechanisms}
\label{subsec:weighted_intro}
\paragraph*{Notation}
Let $\mb{N},\mb{N}_0,\mb{R},\mb{R}_{\ge 0},\mb{R}_{\mygezero}$ denote the sets of natural numbers, natural numbers including zero, real numbers, non-negative real numbers, and positive real numbers. Given $m\in\mb{N}$, let $[m]$ denote the set $\{1,\dots,m\}$. Throughout we will denote with $\text{Poi}(x)$ a Poisson distribution with parameter $x$ and with $\text{Bin}(n,p)$ a Binomial distribution with $n$ trials and success probability $p$ for each trial. We will also use $P\le_{cx}Q$ to denote that distribution $P$ is smaller-equal than distribution $Q$ in the convex order sense, that is, $\mb{E}_P[f(P)] \le \mb{E}_ Q[f(Q)]$ for all convex functions $f:\mb{R}\rightarrow\mb{R}$ \cite{shaked2007stochastic}.

\paragraph*{Weighted Congestion Games} 
In a \emph{weighted congestion game} we are given a set of players $[N]=\{1,\dots,N\}$, and a set of resources $\mc{R}$. Each player $i\in[N]$ has a \emph{weight} $w_i\in\mathbb{R}_{>0}$ and a \emph{strategy set}  
$\mc{A}_i\subseteq 2^{\mc{R}}$ from which player $i$ choses an \emph{action} $a_i\in \mc{A}_i$. In an \emph{allocation} $a=(a_1, \dots, a_N)\in\mc{A}$, each player chooses an action from their strategy profile, where we denote by $ \mc{A}= \mc{A}_1\times\ldots \times \mc{A}_N$ the set of feasible allocations.
 
Given an allocation $a$, denote $x_r(a)$ the \emph{load} of players currently selecting resource $r\in\mc{R}$, i.e., $x_r(a)=\sum_{i\in[N]:r\in a_i} w_i$.
The cost for using each resource $r\in\mc{R}$ depends only on the load of players concurrently selecting that resource, and is defined by a \emph{latency function} $\ell_r :\mb{R}_{>0} \rightarrow \mb{R}_{\ge 0}$.
Throughout the paper we assume that we can evaluate $\ell_r(x)$ at any given load $x$ in polynomial time with respect to the length of the input representation of $\ell_r$ and $x$.
The cost of a player $i$ is obtained by summing the costs of all resources selected, that is $\C_i(a)=\sum_{r\in a_i} \ell_r(x_r(a))$. Throughout this paper, we define the shorthand $c_r(x)= x \ell_r(x)$ for all $x$ and $r\in\mc{R}$.

In a \emph{mixed Nash equilibrium} every player $i$ chooses a probability distribution $\sigma_i$ over their strategy set $\mc{A}_i$, such that no player can unilaterally improve by changing to another action; formally, for the product distribution $\sigma=\sigma_1 \times\dots\times \sigma_N$, 
\(
\mb{E}_{a\sim\sigma}\left[{C}_i(a)\right]
\le 
\mb{E}_{a_{-i}\sim\sigma_{-i}}\left[{C}_i(a_i',a_{-i})\right], \)
for every $i\in[N]$ and $a_i'\in\mc{A}_i$.
Similarly a joint probability distribution $\sigma$ over allocations is a \emph{coarse correlated equilibrium} if for every $i\in[N]$ and $a_i'\in\mc{A}_i$, 
\(
\mb{E}_{a\sim\sigma}\left[{C}_i(a)\right]
\le 
\mb{E}_{a\sim\sigma}\left[{C}_i(a_i',a_{-i})\right].
\)
Finally, the \emph{\systemcost} represents the weighted sum of the players' costs
\begin{equation}
\SC(a)=
\sum_{i \in [N]} w_i C_i(a) = \sum_{r\in\mc{R}} x_r(a) \ell_r(x_r(a)).
\label{eq:systemcost}
\end{equation}

For a mixed or coarse correlated profile $\sigma$ the social cost is defined accordingly as $\mb{E}_{a\sim\sigma}\left[{SC}(a)\right]$.

Throughout the manuscript, we make the following standard assumption on each latency function of the underlying congestion game.
\begin{sassumption} \label{assumptionlatency}
The function $\ell:\mb{R}_{\ge0}\rightarrow\mb{R}_{\ge 0}$ is non-decreasing and semi-convex.\footnote{The function $\ell(x)$ is semi-convex in its domain if the function $c(x)=x\ell(x)$ is convex, in the discrete sense, in the same domain.}
\end{sassumption}

Further, we denote with $\mc{G}$ the set of all congestion games where all latency functions $\{\ell_r\}_{r\in\mc{R}}$ belong to a given set of functions $\mc{L}$. Given an instance $G=(N,\{w_i\}_{i=1}^N,\mc{R},\{\mc{A}_i\}_{i=1}^N,\{\ell_r\}_{r\in\mc{R}})$ of congestion game, we denote with $\mincon$ the problem of minimizing the \socialcost in \eqref{eq:systemcost}.

\paragraph*{Interventions \& Fairness} \label{subseclabel:fairness}
As self-interested decision making often deteriorates the system performance, interventions aim to ameliorate this issue. Formally, an intervention $I:G \mapsto \bar G$ is a map that associates a congestion game $G$ to a new game $\bar G$ (not necessarily a congestion game), where the players are the same, but their perceived cost has been redefined to be $\bar{C}_i(a)$, and their action sets have been (potentially) reduced to $\bar{\mc{A}}_i\subseteq\mc{A}_i$. We emphasize that although $I$ changes players' \textit{perceived} cost from $C_i(a)$ to $\bar{C}_i(a)$, the \systemcost \eqref{eq:systemcost} remains unchanged. 
A \textit{fair} intervention gives identical players identical treatment. That is, if $i$ and $j$ have identical action sets and weights, then $\bar{\mc{A}}_i=\bar{\mc{A}}_j$ and $\bar C_i(a)=\bar C_j(\pi_{ij}(a))$, with $\pi_{ij}(a)$ permuting $a_i$ and $a_j$ in allocation $a$. This implies that if a fair intervention is applied to a symmetric congestion game $G$, the resulting game $\bar G$ is also symmetric.\footnote{In symmetric games, all players have identical action sets, and the cost a player incurs only depends on its action, and the actions played by all players, but not on who plays them, see \cite{vorobeychik2004,2010symmetric}.} Note that this is a very mild request, and corresponds to considering interventions that do not distinguish the players identity in $\bar G$ if these where not distinguishable in $G$. When defined this way, fair interventions include a wide range of commonly employed mechanisms to improve the equilibrium performance, most notably taxation mechanisms, cost-sharing mechanisms, public signaling, and many others.

\paragraph*{Taxation Mechanisms}
In Section \ref{sec:taxes}, we will study \textit{taxation mechanisms}, a specific type of intervention that only modifies the players' costs. Formally, a taxation mechanism $T:G\times r\rightarrow \tau_r$ associates an instance $G$, and a resource $r\in\mc{R}$ to a taxation function $\tau_r:\mb{R}\rightarrow\mb{R}_{\ge0}$. Note that each taxation function $\tau_r$ is (possibly) congestion-dependent, that is, it associates the load on resource $r$ to the corresponding tax. As a consequence, each player $i$ experiences a cost factoring both the cost associated to the chosen resources, and the \mbox{tax, i.e.,}
\[
\bar C_i(a)=\sum_{r\in a_i} [\ell_r(x_r(a))+\tau_r(x_r(a))].
\]
We note that defined this way, taxation mechanisms are fair interventions as all players using a specific resource $r$ experience the same tax $\tau_r$.

\paragraph*{Price of Anarchy}
As typical in the literature, we measure the performance of a given intervention $I$ using the ratio between the \systemcost incurred at the worst-performing outcome and the minimum \systemcost. Given the self-interested nature of the players, an outcome is commonly described by any of the following classical equilibrium notions: pure or mixed Nash equilibria, correlated or coarse correlated equilibria. %
However, we recall that within the class of weighted congestion games, existence is guaranteed for mixed Nash equilibria and correlated or coarse correlated equilibria, but not necessarily for pure Nash equilibria. When considering mixed Nash equilibria, the performance of an intervention $I$ is gauged using the notion of \emph{price of anarchy} \cite{koutsoupias1999worst}, i.e., 
\begin{equation}
\poa(I) = \sup_{G\in\mc{G}} \frac{\PNEcost(G,I)}{\mincost(G)},
\label{eq:poadef}
\end{equation}
where $\mincost(G)=\min_{a\in\mc{A}} \SC(a)$ is the minimum \socialcost for instance $G$, and $\PNEcost(G,I)$ denotes the highest expected \socialcost at a mixed Nash equilibrium obtained when employing the intervention $I$ on the game $G$. By definition, $\poa(I)\ge1$ and the lower the price of anarchy, the better performance $I$ guarantees. While it is possible to define the notion of price of anarchy for each and every equilibrium class mentioned, we do not pursue this direction, as we will show that all these metrics coincide within our setting. Thus, we will simply use $\poa(I)$ to refer to the efficiency values of \emph{any} and \emph{all} these equilibrium classes. Finally, we observe that, while interventions influence the players' perceived cost, they do not impact the expression of the \systemcost, which is still of the form in \eqref{eq:systemcost}.

\subsection{Roadmap}
The remainder of the manuscript is organized as follows. In Section~\ref{sec:fairlowerbound}, we provide the unconditional lower bound for fair interventions. In Section~\ref{sec:taxes}, we provide fair, polynomially computable taxes that achieve a price of anarchy matching the lower bound previously derived. We specialize this result to the case of polynomial latency functions in Section~\ref{sec:poly}. In Section~\ref{sec:hardness} we show hardness of approximating the minimum social cost and limitations of unfair interventions.

\section{Lower bound on Price of Anarchy of Fair Interventions} \label{sec:fairlowerbound}
Our first result shows that fair interventions incur a fundamental limitation in their ability to improve the equilibrium performance, and that such limitation is independent of how much computational power the intervention is given, i.e., $I(G)$ might even require exponential computation in the size of $G$. This is the case already in the setting where all resources have the same latency $\ell$ and all players are symmetric, i.e., they have the same action sets and weights.

\begin{theorem} \label{th:approximation}
In congestion games where all resources have the same latency function $\ell$, there exists no fair intervention that induces a price of anarchy lower than $\rho_\ell - \varepsilon$, for any $\varepsilon>0$, where
\be
\rho_\ell=\sup_{t>0}\frac{\mb{E}_{P\sim\text{Poi}(1)} [tP\ell(tP)]}{t\ell(t)},
\label{eq:approxfact}
\ee
and we define $\ell(0)=0$. {If $\rho_\ell=\infty$, the price of anarchy of any fair intervention is unbounded.
 }

\end{theorem}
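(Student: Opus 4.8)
The plan is to construct, for every $\varepsilon>0$ and every $t^\star>0$ that nearly attains the supremum in \eqref{eq:approxfact}, a symmetric weighted congestion game on parallel links with identical players, so that \emph{any} fair intervention applied to it leaves a symmetric game, and then to argue that every mixed profile (in particular the symmetric mixed Nash equilibrium guaranteed to exist) has social cost at least $(\rho_\ell-\varepsilon)$ times the optimum. Concretely, I would take $N$ identical players each of weight $w=t^\star/k$ for a large integer $k$, and $n=N/k$ parallel resources, each with latency $\ell$. The optimal allocation spreads the players so that exactly $k$ players sit on each resource, giving load $t^\star$ on each and social cost $n\cdot t^\star\ell(t^\star)$. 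Since a fair intervention cannot distinguish identical players, $\bar G$ is symmetric, hence by \cite{nashgames} a \emph{symmetric} mixed Nash equilibrium $\sigma$ exists, in which every player uses the same distribution over the $n$ resources; by symmetry of that distribution over the (interchangeable) resources it must in fact be uniform, so each player independently picks each resource with probability $1/n$.

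The key computation is then the expected social cost of \emph{any} symmetric product profile where each of the $N$ players picks each of the $n$ resources with probability $1/n$ — this does not depend on the intervention at all, only on $\ell$ and on the profile's symmetry. The number of players on a fixed resource is $\mathrm{Bin}(N,1/n)=\mathrm{Bin}(N, k/N)$, whose load is $w$ times that count, i.e.\ $(t^\star/k)\,\mathrm{Bin}(N,k/N)$. As $N\to\infty$ with $k$ fixed, $\mathrm{Bin}(N,k/N)\to\mathrm{Poi}(k)$ and, after scaling, the load on each resource converges in distribution to $t^\star\,\mathrm{Poi}(1)/1$ — more precisely I would pass to the limit so that the scaled load behaves like $t^\star P$ with $P\sim\mathrm{Poi}(1)$ when we let $k=1$; for general $k$ one works with $(t^\star/k)\mathrm{Poi}(k)$ and notes $\mathbb{E}[(t^\star/k)\mathrm{Poi}(k)\cdot\ell((t^\star/k)\mathrm{Poi}(k))]$, summed over the $n=N/k$ resources, compared against $n\,t^\star\ell(t^\star)$. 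Taking $k=1$ and $n\to\infty$ gives the cleanest bound: the expected social cost is $n\,\mathbb{E}_{P\sim\mathrm{Poi}(1)}[t^\star P\,\ell(t^\star P)] + o(n)$ (the $o(n)$ absorbing the Binomial-to-Poisson error), while the optimum is $n\, t^\star\ell(t^\star)$, so the ratio tends to $\mathbb{E}_{P\sim\mathrm{Poi}(1)}[t^\star P\,\ell(t^\star P)]/(t^\star\ell(t^\star))$, which by choice of $t^\star$ is within $\varepsilon$ of $\rho_\ell$. If $\rho_\ell=\infty$ the same construction makes the ratio arbitrarily large. One should also double-check that the intervention is allowed to shrink action sets $\bar{\mc A}_i\subseteq\mc A_i$: if it deletes resources it only makes congestion worse, and fairness forces the deletions to be identical across players, so the symmetric-equilibrium argument is unaffected (in the worst case a single resource remains, with even larger cost).

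I expect the main obstacle to be handling the action-set reduction and the non-product equilibria cleanly, plus making the Binomial-to-Poisson passage rigorous for a possibly unbounded, merely semi-convex $\ell$ (so that $\mathbb{E}[t^\star P\,\ell(t^\star P)]$ could be $+\infty$, which is exactly the $\rho_\ell=\infty$ case and must be treated by truncation). The cleanest route is: (i) show that for a \emph{symmetric} game on parallel links the worst symmetric mixed Nash equilibrium is the uniform product profile, or at least that \emph{every} mixed profile — equilibrium or not — already has social cost $\geq (\rho_\ell-\varepsilon)\cdot\mincost$, which sidesteps having to characterize equilibria at all; the paper's own roadmap (``any mixed profile, whether or not an equilibrium, has the desired high social cost'') confirms this is the intended line. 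The point is that by convexity of $x\mapsto x\ell(x)$ (\cref{assumptionlatency}) and Jensen/convex-order arguments, the uniform product profile is the \emph{best} symmetric profile, so if even it is bad, all of them are. Then (ii) feed in the symmetric existence result only to guarantee there \emph{is} an equilibrium to point at, and conclude $\PNEcost(\bar G)\geq(\rho_\ell-\varepsilon)\mincost(G)$, hence $\poa(I)\geq\rho_\ell-\varepsilon$ for the arbitrary fair intervention $I$.
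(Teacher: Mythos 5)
Your proposal follows the same route as the paper: build a symmetric singleton game with identical players and identical parallel links, invoke the existence of a symmetric mixed Nash equilibrium in the (still symmetric) intervened game, and then bound the cost of \emph{every} symmetric mixed profile from below by that of the uniform one, which converges to $\rho_\ell\cdot\mincost$ as the number of resources grows (via the $\mathrm{Bin}(m,1/m)\to\mathrm{Poi}(1)$ limit). Your $k=1$ specialization is exactly the paper's construction (one player per resource in the optimum, with weight $w\in\arg\sup$); the extra parameter $k$ only complicates the limit and does not help, so you were right to discard it. One genuine misstep in the middle of the proposal is the claim that the symmetric Nash equilibrium ``must in fact be uniform'' over the resources: this does \emph{not} follow from the intervention being fair. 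Fairness guarantees that all players use the same mixed strategy, not that this common strategy is uniform over the (identical) resources; a priori the equilibrium distribution could be skewed, and the intervention has freedom to shape it. You correct this in your last paragraph with the right fix, which is also the paper's: instead of characterizing the equilibrium, show that the uniform product profile \emph{minimizes} the expected social cost among all symmetric product profiles, so that any symmetric equilibrium is at least as costly. The paper does this via convexity of $y_r\mapsto\mathbb{E}_{X\sim\mathrm{Bin}(m,y_r)}[c(wX)]$ (citing \cite{shaked2007stochastic}) and a KKT argument on the simplex; your ``Jensen/convex-order'' remark points to the same mechanism. You also correctly flag the need to handle action-set restriction (harmless, only increases congestion) and the $\rho_\ell=\infty$ case by truncation. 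Net: correct approach, same as the paper, with one transient but self-corrected logical slip.
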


\begin{proof}
We prove the claim by exhibiting an instance of congestion game $G$ with $m$ resources and $m$ players for which no fair intervention has a price of anarchy smaller that $\rho_\ell-\varepsilon$, where $\varepsilon$ can be made arbitrarily small by letting $m$ grow unbounded. 

Consider a singleton congestion game $G$ with $m$ identical resources $r_1,\dots,r_m$, all having the same latency function $\ell$. The game $G$ has $m$ identical players, in the sense that each player has identical action set $\mc{A}_i = \{\{r_1\},\{r_2\},\dots,\{r_m\}\}$ and identical weight $w_i = w$, for all $i$, %
where 
\begin{align*}
w &\in  \arg \sup_{t>0}\frac{\mb{E}_{P\sim\text{Poi}(1)} [tP\ell(tP)]}{t\ell(t)}.
\end{align*}

Recall the shorthand $c(x)=x\ell(x)$. Observe that the allocation $a$ where each player selects a different resource has social cost $\SC(a) = \sum_{r \in \mc{R}} c(w) = m c(w)$. Further, observe that, since the game $G$ is symmetric\footnote{It is immediate to observe that every congestion game where players have the same action set and the same weight is a symmetric game according to the definition in, e.g., \cite{vorobeychik2004,2010symmetric}.} and $I$ is a fair intervention, $\bar G$ is also symmetric. As such, $\bar G$ admits a symmetric mixed Nash equilibrium \cite[Theorem 2]{nashgames}, i.e., a mixed Nash equilibrium where each player plays the same probability distribution over the actions. 

To obtain the desired lower bound on the price of anarchy, we will show that \emph{any} symmetric mixed assignment (regardless of whether it is a Nash equilibrium or not) has high social cost compared to $a$. Specifically, we show that the social cost of any symmetric mixed assignment is at least $(\rho_\ell - \varepsilon)\SC(a)$, implying that the cost of any symmetric mixed Nash equilibrium must also be at least $(\rho_\ell - \varepsilon)\SC(a)$. Towards this goal, denote with $y\in\mb{R}^{m}$ a probability vector, where $y_r$ denotes the probability that any player chooses resource $r\in\mc{R}$.\footnote{Since players have singleton actions, every action corresponds to a resource, and thus we define a symmetric mixed profile directly over the set of resources.} It follows that the expected social cost of any such mixed profile is 
\[
\mb{E}_{X_r\sim \text{Bin}(m,y_r)}\left [\sum_{r\in\mathcal{R}} c(wX_r)\right],
\]
which holds since $X_r=\sum_{i=1}^m X_{r,i}$ where $X_{r,i}\sim \text{Ber}(y_r)$ and $X_{r,i},X_{r,j}$ are independent for $i\neq j$, as each player chooses independently resource $r$ with probability $y_r$. 

To conclude the proof, we will now show that the latter cost is no-lower than that achieved by the uniform distribution $y_r=1/m$ for all $r\in\mc{R}$. This will be sufficient to conclude, as it implies that %
\[
\frac{\mb{E}_{X_r\sim \text{Bin}(m,y_r)}\left [\sum_{r\in\mathcal{R}} c(wX_r)\right]}{mc(w)}\ge \frac{\mb{E}_{X\sim \text{Bin}(m,1/m)}\left [c(wX)\right] }{c(w)}\xrightarrow{m\rightarrow\infty}\frac{\mb{E}_{X\sim \text{Poi}(1)}\left [c(wX)\right] }{c(w)}=\rho_\ell,
\]
i.e., that the price of anarchy is no lower than $\rho_\ell-\varepsilon$, where $\varepsilon$ can be reduced to any desired positive number by letting $m\rightarrow\infty$. This holds true since the expected value over the Binomial distribution $\text{Bin}(m,1/m)$ is known to converge to the expected value over the Poisson distribution $\text{Poi}(1)$ \cite[p.56]{papoulis2002probability}.

Therefore, denoting $p(y_r)=\mb{E}_{X_r\sim \text{Bin}(m,y_r)}\left [c(wX_r)\right]$, it suffices to show that $\sum_{r\in\mc{R}} p(y_r) $ is minimized by the uniform distribution $y_r=1/m$.

To see this, note that $p_r(\cdot)$ is a convex (differentiable) function, as shown in \cite[Example 8.A.3]{shaked2007stochastic}, and that the simplex is a convex polytope. Therefore, the corresponding KKT conditions of optimality
\[
\begin{cases}
p'(y_r)+\lambda &= 0\\
\sum_r y_r&=1
\end{cases}
\] 
are readily satisfied by $y_r^*=1/m$, $\lambda^* = -p'(1/m)$, concluding the proof.
\end{proof}

\section{Fair, optimal, and efficiently-computable Taxes} 
\label{sec:taxes}

In this section we provide a general recipe to design efficiently-computable taxation mechanisms with a price of anarchy matching the lower bound on fair interventions previously derived. We will then specialize this approach to the case of polynomial latencies, for which we will derive explicitly given taxation mechanisms. 
We note that the taxation mechanisms we study are not only fair in the sense of \cref{subseclabel:fairness}, but satisfy a \textit{much stronger} fairness notion in that players using the same resources are treated equally, regardless of their weight.

Our approach will make use of two crucial ingredients which we now introduce: 
i) a {\it linear programming relaxation of {$\mincon$}}, and ii) a {\it parametrized taxation mechanism}. 
At its core, our approach will derive a set of parameters from an optimal solution of such linear program, and use them for the aforementioned taxation mechanism. 
Towards this goal, we make the following minimal assumption, which we require in this section and in \cref{sec:poly}.
\begin{sassumption} \label{assumptionbounded}
Players' weights $\{w_i\}_{i=1}^N$ are multiples of some factor $\delta\ge 0$ and $w_i/\delta$ are polynomially bounded in $N$ and $1/\varepsilon$, where $\varepsilon$ is the desired accuracy.
\end{sassumption}
We note that this assumption, also used in \cite{Makarychev18}, is purely technical as it if often possible to round the weights of a given game so that the resulting cost changes by a small quantity (see \cite[Theorem A.1]{Makarychev18}).
Recall the shorthand $c_r(x)= x \ell_r(x)$ for all $x$ and $r\in\mc{R}$.

\subsection{LP Relaxation of \mincon} \label{sec:LPprogram}
The first ingredient our approach employs is a suitably defined relaxation of the original (integral) {\mincon} problem. 
While it is straightforward to formulate {\mincon} as the following integer program (where $y_{i,a_i}$ equals one if player $i$ selects action $a_i\in\mc{A}_i$ and zero else),

\be
\label{eq:original_ICVXP}
\begin{aligned}
\min
\quad & \sum_{r\in\mc{R}} c_r\left(\sum_{i\in [N]} w_iv_{r,i}\right)
\\
\text{\normalfont subject to}\quad 
& v_{r,i} =\sum_{a_i\in\mc{A}_i:r\in a_{i}}y_{i,a_i} && \text{for all } r\in \mc{R}, i\in[N],\\
& \sum_{a_i\in\mc{A}_i} y_{i,a_i} = 1 && \text{for all } i \in [N],\\
& y_{i,a_i} \in \{0,1\} && \text{for all } a_i\in\mc{A}_i, i\in[N], 
\end{aligned}
\ee

we do not employ its natural relaxation obtained removing the integral constraints, but instead make use of a more refined one following an approach similar to that in \cite{Makarychev18}. 
Towards presenting the relaxation, it is convenient to introduce variables $z_{S,r}\in\{0,1\}$ for $S\subseteq [N]$, $r\in\mc{R}$, encoding what subset of players utilize resource $r$, where $\sum_{S\subseteq[N]} z_{S,r}=1$, i.e., only one subset is active. The previous program can then be equivalently written as the following linear program in the variables $z_{S,r}$, $v_{r,i}$, $y_{i,a_i}$, upon additionally restricting the variables $z_{S,r}$ to be binary, i.e., $z_{S,r}\in\{0,1\}$,
\be
\label{eq:cvxprogram}
\begin{aligned}
\min
\quad & 
\sum_{r\in\mc{R}} 
\sum_{S\subseteq [N]}
c_r\left(\sum_{i\in S}w_i \right) \cdot z_{S,r}\\
\text{\normalfont subject to}\quad  
& \sum_{S\subseteq [N]} z_{S,r} = 1 && \text{for all } r\in \mc{R},\\
& \sum_{S\subseteq [N]:i\in S} z_{S,r} = v_{r,i} &&\text{for all } r\in \mc{R}, i\in[N],\\
& z_{S,r}\ge 0 &&  \text{for all } r\in \mc{R}, S\subseteq[N],\\
& v_{r,i} =\sum_{a_i\in\mc{A}_i:r\in a_{i}}y_{i,a_i} && \text{for all } r\in \mc{R}, i\in[N],\\
& \sum_{a_i\in\mc{A}_i} y_{i,a_i} = 1 && \text{for all } i \in [N],\\
& y_{i,a_i} \ge 0 && \text{for all } a_i\in\mc{A}_i, i\in[N].
\end{aligned}
\ee

Program \eqref{eq:cvxprogram} constitutes the linear programming relaxation of  \eqref{eq:original_ICVXP} we will use in the remainder of this manuscript.
{Due to the presence of $z_{S,r}$, $S\subseteq [N]$, this linear program has exponentially many decision variables. Yet, we show in \cref{app:LPcomputation} that \eqref{eq:cvxprogram} can be solved to arbitrary precision in polynomial time, following a similar approach to that of \cite{Makarychev18}.}

\subsection{Parametrized Taxation Mechanisms}
Towards introducing the taxation mechanism, it is convenient to define the following functions
$p_r:\mb{R}^{N}\rightarrow \mb{R}$ (one for each resource) 
\be
p_r(v_r)
=
\mb{E}\left[c_r\left(\sum_{i\in [N]} w_i P_{i}\right)\right],
\label{eq:pdef}
\ee
where $P_{i}\sim \text{Poi} \left({v_{r,i}}\right)$ are independent Poisson random variables, $c_r:\mb{R}\rightarrow\mb{R}$, defined by $c_r(x)=x\ell_r(x)$, is convex, and $v_{r,i}$ denotes the $i$-th entry of the vector $v_r\in\mb{R}^N$.\footnote{In the ensuing proof, $v_{r,i}$ will represent the proportion of player $i$'s weight allocated to resource $r$.}
Throughout, we assume we can evaluate \eqref{eq:pdef} for given $v_r$ in polynomial time with respect to the input representation of the game, and show in \cref{lem:Md} that this is the case for the widely studied class weighted congestion games with polynomial latencies by obtaining an explicit expressions.

The parametrized taxation mechanism we employ  $\tau_r:\mb{R}\times \mb{R}^N\rightarrow\mb{R}$ associates each resource $r$ to a corresponding tax $\tau_r(x;v_r)$ that depends on the total load $x$ on that resource and on the previously introduced parameter $v_r$. 
Letting $\bar\ell_r(x;v_r)=\tau_r(x;v_r)+\ell_r(x)$ represent the modified latency function accounting for both the original latency and the imposed tax, we define such taxation mechanism through the following recursion

\be
x_r\bar{\ell}_r(x_r;v_r)-\sum_i {w_i}{v}_{r,i} \bar\ell_r(x_r+w_i;v_r)=c_r(x_r)- p_r(v_r)\qquad \forall x_r,v_r\ge0,
\label{eq:recursion_tax}
\ee
where we additionally set $\bar \ell_r(x_r,0)=\ell_r(x_r)$ for all $x_r\ge0$, and $\bar \ell_r(0,v_r)=0$ for all $v_r\ge0$. While implicitly defined in \eqref{eq:recursion_tax} for sake of generality, we will provide an explicit expression for $\tau_r(x;v_r)$ for the class of polynomial latencies in \Cref{sec:poly}.

\subsection{An Optimal Taxation Mechanism}
Equipped with the parametrized taxation mechanism defined in \eqref{eq:recursion_tax}, and with the linear programming relaxation of \eqref{eq:cvxprogram}, we are now ready to provide a general methodology to derive taxation mechanisms with optimal approximation. We begin with the general case, and then specialize this to polynomial latency functions. In the following, denote $P(\lambda)\sim \text{Poi}(\lambda)$ a Poisson random variable with mean $\lambda\ge0$. 

\begin{theorem}
\label{thm:opttaxes}
Consider a weighted congestion game $G$ where each latency belongs to a common set of functions $\mc{L}$  and satisfies \cref{assumptionlatency}. Let $(\bar z, \bar v, \bar y)$ be an optimal solution of  \eqref{eq:cvxprogram}.	Let $\bar \ell_r(x; \bar v_r)$ satisfy recursion \eqref{eq:recursion_tax} and be non-decreasing in the first argument, $r\in\mc{R}$. 
Consider the taxation mechanism $T$ that uses the perceived latency functions $\bar \ell_r(x; \bar v_r)$.	
	\begin{itemize}[leftmargin=5.5mm]
	\item[$\bullet$] Then, the price of anarchy of $T$ is no higher than $\rho=\sup_{\ell\in\mc{L}}\rho_\ell$.
	\item[$\bullet$] Further, for any choice of $\varepsilon>0$, one can compute \emph{in polynomial time} taxes whose price of anarchy is no-higher than $\rho+\varepsilon$. {The result holds for mixed Nash and correlated/coarse correlated equilibria.}
	\end{itemize}
\end{theorem}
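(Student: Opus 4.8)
The plan is to prove the price-of-anarchy bound via a smoothness-style argument built directly on the definition of (mixed/coarse correlated) Nash equilibrium, using the recursion \eqref{eq:recursion_tax} to cancel terms and the LP relaxation \eqref{eq:cvxprogram} to certify the optimum. Let $\sigma$ be any coarse correlated equilibrium of the taxed game $\bar G$ (this subsumes mixed Nash), let $a^\star$ be an optimal allocation for $\mincon$, and let $(\bar z,\bar v,\bar y)$ be the optimal LP solution. Since $\bar G$ uses perceived latencies $\bar\ell_r$, the equilibrium inequality for each player $i$ deviating to $a^\star_i$ gives $\mb{E}_{a\sim\sigma}[\bar C_i(a)]\le \mb{E}_{a\sim\sigma}[\bar C_i(a^\star_i,a_{-i})]$. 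Summing over $i$, the left-hand side is $\mb{E}_\sigma\big[\sum_r x_r(a)\bar\ell_r(x_r(a))\big]$; the right-hand side, after adding one copy of player $i$'s weight to the load on each resource in $a^\star_i$, is bounded by $\mb{E}_\sigma\big[\sum_r \sum_{i:r\in a^\star_i} w_i\,\bar\ell_r(x_r(a)+w_i)\big]$ (using monotonicity of $\bar\ell_r$ in the first argument to discard any weights of other deviating players, exactly as in classical weighted smoothness arguments).

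The key algebraic step is then to choose, for each resource $r$, the ``deviation load'' to match the LP variable $\bar v_r$, and invoke recursion \eqref{eq:recursion_tax}: for every realized load $x_r$ it states $x_r\bar\ell_r(x_r;\bar v_r)-\sum_i w_i\bar v_{r,i}\,\bar\ell_r(x_r+w_i;\bar v_r)=c_r(x_r)-p_r(\bar v_r)$. Taking expectations over $\sigma$ and summing over $r$, and combining with the inequality from the previous paragraph, yields something of the form $\mb{E}_\sigma[\SC(a)]\le \sum_r p_r(\bar v_r)$ — provided the ``deviation'' term in the summed equilibrium inequality can be identified with $\sum_r\sum_i w_i\bar v_{r,i}\bar\ell_r(x_r(a)+w_i)$. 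This is where I expect the main obstacle: the equilibrium inequality naturally produces deviations indexed by the \emph{actual} optimal actions $a^\star_i$ (hence integral $0/1$ loads), whereas the recursion is stated for the \emph{fractional} LP optimum $\bar v_r$. Reconciling these is precisely the role of the LP relaxation: one shows $\sum_{a_i\in\mc{A}_i:r\in a_i}\bar y_{i,a_i}=\bar v_{r,i}$, so that averaging the deviations over the distribution $\bar y_i$ (a valid randomized deviation, still admissible in the CCE inequality) replaces the integral optimal actions by the fractional LP loads. I would therefore apply the equilibrium inequality with the \emph{randomized} deviation $a_i'\sim\bar y_i$ for each $i$, independently, so that the deviation term becomes exactly $\sum_r\sum_i w_i\bar v_{r,i}\mb{E}_\sigma[\bar\ell_r(x_r(a)+w_i)]$.

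It then remains to bound $\sum_r p_r(\bar v_r)$ against $\mincost(G)=\sum_r c_r(\sum_i w_i\bar v^{\mathrm{int}}_{r,i})$, i.e., to compare the Poisson-smoothed objective $p_r(\bar v_r)=\mb{E}[c_r(\sum_i w_iP_i)]$, $P_i\sim\text{Poi}(\bar v_{r,i})$, with the LP optimal value $\sum_r\sum_S c_r(\sum_{i\in S}w_i)\bar z_{S,r}$. Here I would use a stochastic-order / convex-order argument: since $\sum_{S:i\in S}\bar z_{S,r}=\bar v_{r,i}$, the integral ``Bernoulli-type'' mixture encoded by $\bar z_{\cdot,r}$ is dominated in convex order by the corresponding independent-Poisson sum (Poissonization increases convex-order, and $c_r$ is convex by Assumption~\ref{assumptionlatency}), giving $p_r(\bar v_r)\le \rho_\ell\cdot(\text{LP contribution of }r)$, summing to $\mb{E}_\sigma[\SC(a)]\le\rho\cdot\mincost(G)$; the ratio $\rho_\ell$ is exactly the supremum in \eqref{eq:approxfact}, tying the bound to \cref{th:approximation}. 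For the second bullet, one replaces the exact optimal LP solution with an $\varepsilon$-approximate one computed in polynomial time via the ellipsoid method (as promised in \cref{app:LPcomputation}), and verifies that the resulting taxes are computable in polynomial time under \cref{assumptionbounded} (the recursion \eqref{eq:recursion_tax} is solved layer by layer over the polynomially-many distinct load values); the slack $\varepsilon$ propagates linearly through the chain of inequalities above, yielding $\poa\le\rho+\varepsilon$ for mixed Nash and correlated/coarse correlated equilibria alike, since the argument only used the CCE inequality.
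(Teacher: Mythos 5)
Your proposal's overall skeleton matches the paper's: you invoke the CCE inequality with the randomized deviations $a_i'\sim\bar y_i$ (the paper packages this as the smoothness inequality \eqref{eq:smoothness}), use the recursion \eqref{eq:recursion_tax} to collapse the deviation sum into $\SC(a)-\sum_r p_r(\bar v_r)$, and then aim to bound $\sum_r p_r(\bar v_r)$ against the LP value; the handling of the approximate LP for the second bullet also matches.

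The gap is in the stochastic-order step that is supposed to produce the factor $\rho_\ell$. You assert that the ``Bernoulli-type mixture'' encoded by $\bar z_{\cdot,r}$ is dominated in convex order by the ``independent-Poisson sum'' and that this gives $p_r(\bar v_r)\le\rho_\ell\cdot(\text{LP contribution of }r)$. That domination, as written, yields $\mb{E}[c_r(\text{Bernoulli mixture})]\le\mb{E}[c_r(\text{Poisson sum})]$, i.e.\ the opposite inequality (LP contribution $\le p_r(\bar v_r)$), and even with the direction corrected no single convex-order comparison between these two variables introduces the factor $\rho_\ell$. What is actually needed is the two-link chain of Lemma~\ref{lem:main-technical}:
\[
\sum_{i\in[N]} w_i P(\bar v_{r,i}) \;\le_{cx}\; \sum_{S\subseteq[N]}\Bigl(\sum_{i\in S} w_i\Bigr) P(\bar z_{S,r}) \;\le_{cx}\; P(1)\sum_{S\subseteq[N]}\Bigl(\sum_{i\in S} w_i\Bigr)\chi_{S,r},
\]
where the $\chi_{S,r}$ are mutually exclusive Bernoullis with $\Pr(\chi_{S,r}=1)=\bar z_{S,r}$. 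The extra independent factor $P(1)$ is the crucial ingredient: conditioning on which $\chi_{S,r}$ equals $1$ and applying the definition \eqref{eq:approxfact} pointwise gives $\mb{E}\bigl[c_r\bigl(P(1)\sum_{i\in S}w_i\bigr)\bigr]\le\rho_\ell\, c_r\bigl(\sum_{i\in S}w_i\bigr)$, and averaging over $S$ with weights $\bar z_{S,r}$ produces exactly the LP term. Without this decoupling via $P(1)$, the factor $\rho_\ell$ never appears, and since this is the only place where the semi-convexity of $\ell$ and the worst-case supremum in $\rho_\ell$ enter, it is the load-bearing part of the proof rather than a routine verification.
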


\begin{proof}
We begin with proving the first bullet point in two steps. In the first step, we compare an arbitrary pure profile $a\in\mc{A}$ with a mixed profile $\bar y_{i}$ arising from the solution of the linear programming relaxation, and show that \( \sum_{i=1}^N\sum_{a'_i\in\mc{A}_i} \bar y_{i,a'_i}w_i[\bar{C}_i(a)-\bar{C}_i(a'_{i}, a_{-i})] \ge \SC(a) -\sum_{r\in\mc{R}} p_r(\bar v_r) \), where $\bar C_i(a) = \sum_{r\in a_i}\bar \ell_r(x_r(a),\bar v_r)$ represents a player's perceived cost when factoring in both the original latency and the taxes. In the second step, which is the challenging step, we prove that $\sum_{r\in\mc{R}} p_r(\bar v_r) \leq \rho\SC(\opt{a})$. This is sufficient to conclude, as \(
\sum_{i=1}^N\sum_{a'_i\in\mc{A}_i} \bar y_{i,a'_i}w_i[\bar{C}_i(a)-\bar{C}_i(a'_{i}, a_{-i})] \ge \SC(a) -\sum_{r\in\mc{R}} p_r(\bar v_r) \ge \SC(a)-\rho\SC(\opt{a}),
\)
combined with \emergencystretch 20em%
$0\ge \mb{E}_{a\sim\sigma}\left[\sum_{i=1}^N\sum_{a'_i\in\mc{A}_i} \bar y_{i,a'_i}w_i[\bar{C}_i(a)-\bar{C}_i(a'_{i}, a_{-i})]\right]$ for all mixed, correlated, or coarse correlated equilibria $\sigma$ (by linearity of expectation and equilibrium definition), implies $0\ge \SC(\NE{a})-\rho\SC(\opt{a})$.

To prove the first step, let $x_r=x_r(a)$ be the total weight imposed on resource $r$ by allocation $a$. As $\bar\ell_r$ is non-decreasing by assumption, $\sum_{i=1}^N\sum_{a'_i\in\mc{A}_i} \bar y_{i,a'_i}w_i[\bar{C}_i(a)-\bar{C}_i(a'_{i}, a_{-i})]$
$\ge$ $\sum_{i=1}^N w_i \sum_{r\in a_i} \bar\ell_r(x_r;\bar v_r)$ $- \sum_{i=1}^N\sum_{a'_i\in\mc{A}_i}\bar y_{i,a'_i}w_i \sum_{r\in a_{i}'} \bar\ell_r(x_r+w_i;\bar v_r)$. Re-arranging terms, this is equal to $\sum_{r\in\mc{R}} x_r\bar\ell_r(x_r;\bar v_r) - \sum_{r\in\mc{R}}\sum_{i=1}^N \bar v_{r,i} {w_i}\bar\ell_r(x_r+w_i;\bar v_r)$. Then, as $\bar \ell_r$ satisfies \eqref{eq:recursion_tax}, this equals $\sum_{r\in\mc{R}} c_r(x_r)-\sum_{r\in\mc{R}} p_r(\bar v_r)$, and as $\sum_{r\in\mc{R}} c_r(x_r)=\SC(a)$, $\sum_{i=1}^N\sum_{a'_i\in\mc{A}_i} \bar y_{i,a'_i}w_i[\bar{C}_i(a)-\bar{C}_i(a'_{i}, a_{-i})] \ge \SC(a) - \sum_{r\in\mc{R}} p_r(\bar v_r)$ as desired.

To prove $\sum_{r\in\mc{R}} p_r(\bar v_r) \leq \rho\SC(\opt{a})$, denote with \rm{LP}$(\bar z, \bar v, \bar y)$ the optimal value of \eqref{eq:cvxprogram}.
Let $P({\bar v_{r,i}})$ be independent Poisson random variables. Then, using the definition of $p_r(v_r)$ from \cref{eq:pdef}, Lemma~\ref{lem:main-technical} (whose statement is included after this proof), and noting that $c_r$ is a convex function, we get
\begin{align}
p_r(\bar v_r)
=
\mb{E} %
\left[c_r\left(\sum_{i\in [N]} w_i P({\bar v_{r,i}})\right)\right]
\le 
\mb{E} 
\left[c_r\left(P(1) \sum_{S\subseteq [N]} \left(\sum_{i\in S} w_i\right) \bar\chi_{S,r}\right)\right],
\label{eq:ExOverME}
\end{align}
where the $\bar\chi_{S,r}$ are mutual exclusive Bernoulli random variables with $Pr(\bar\chi_{S,r}=1) = {\bar z_{S,r}}$
and with probability 1, one and only one of the  $\bar\chi$'s is 1. 
In more detail, the latter inequality follows by recalling that if two distributions $P$ and $Q$ are such that $P\le_{cx} Q$, then for any convex function $f:\mb{R}\rightarrow\mb{R}$ it is $\mb{E}_P[f(P)] \le \mb{E}_ Q[f(Q)]$, see \cite{shaked2007stochastic}. Therefore, \eqref{eq:ExOverME} follows by taking the expected value of the convex function $c_r$, and applying the result in \cref{lem:main-technical}. 
It follows that
\begin{equation}\label{eq:ExOverME2}
\begin{aligned}
 {\mb{E} \left[c_r\left(P(1) \sum_{S\subseteq [N]} \left(\sum_{i\in S} w_i\right) \bar\chi_{S,r}\right)\right]}&=
 \sum_{S\subseteq [N]}  \mb{E} \left[\bar\chi_{S,r} \cdot c_r\left(P(1)  \left(\sum_{i\in S} w_i\right) \right)\right] \\
=  \sum_{S\subseteq [N]}  \mb{E} \left[\bar\chi_{S,r}\right] \cdot  \mb{E} \left[c_r\left(P(1)  \left(\sum_{i\in S} w_i\right) \right)\right] &= 
 \sum_{S\subseteq [N]}  \bar{z}_{S,r} \cdot  \mb{E} \left[c_r\left(P(1)  \left(\sum_{i\in S} w_i\right) \right)\right] \\
 &\le \rho  \sum_{S\subseteq [N]}  \bar{z}_{S,r} \cdot c_r\left(\sum_{i\in S} w_i\right) .
\end{aligned}
\end{equation}
Combining \eqref{eq:ExOverME} and \eqref{eq:ExOverME2} and summing over all $r\in \mc{R}$, we get
\begin{align}\label{eq:ExOverME3}
 \sum_{r\in \mc{R}} p_r(\bar v_r) &\le \rho  \sum_{r\in \mc{R}} \sum_{S\subseteq [N]}  \bar{z}_{S,r} \cdot c_r\left(\sum_{i\in S} w_i\right) 
=  \rho \text{LP}(\bar z, \bar v, \bar y)
\le \rho\SC(\opt{a}),
\end{align}
where the last step follows as $(\bar z, \bar v, \bar y)$ is optimal for the relaxation of \eqref{eq:cvxprogram}. 
This concludes the proof of the first bullet point of the theorem.

The proof of the second follows readily upon replacing an exact solution $(\bar z, \bar v, \bar y)$ of \eqref{eq:cvxprogram} with an approximate solution. In particular, using a 1+$\varepsilon/\rho$ approximate solution (which can be computed in polynomial time) and following the same steps of the previous proof will yield the desired price of anarchy of $\rho+\varepsilon$. 
\end{proof}

The proof of the previous theorem hinges upon a key inequality, which is shown in the following Lemma whose proof is deferred to \cref{sec:app-main-technical}.

\begin{lemma}
\label{lem:main-technical}
Fix $r\in \mc{R}$. Let $(z, v, y)$ be a feasible solution of the linear programming relaxation \eqref{eq:cvxprogram}. 
Let $(\chi_{S,r})_{S\subseteq[N]}$ be a vector of mutual exclusive %
Bernoulli random variables, s.t. $Pr(\chi_{S,r}=1) = z_{S,r}$. 
Let $(P(v_{r,i}))_{i\in[N]}$ and $(P(z_{S,r}))_{S\subseteq[N]}$ be two sets of independent Poisson random variables.  %
Let $P(1)$ be independent of the $\chi$'s.
Then 
\vspace{-8pt}
\[
   \sum_{i\in[N]} w_i\cdot P(v_{r,i})
    \; \le_{cx}\; \sum_{S\subseteq[N]} \left(\sum_{i\in S} w_i\right) P(z_{S,r}) 
   \; \le_{cx}\; P(1) \sum_{S\subseteq[N]}\left(\sum_{i\in S} w_i\right)  \chi_{S,r}.
\]
\end{lemma}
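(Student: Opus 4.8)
The plan is to establish the two convex-order inequalities separately, working resource-by-resource (so $r$ is fixed throughout and I will drop it from the notation where convenient). Both inequalities are statements about sums of independent Poisson random variables weighted by fixed nonnegative coefficients, and the natural tool is the characterisation of the convex order via comparison of the underlying intensity measures together with closure properties of $\le_{cx}$ under independent summation and mixing. The key structural fact I would use is that for a single coefficient $a\ge 0$ and means $\mu_1,\mu_2\ge 0$, a weighted Poisson $a\,P(\mu)$ has ``spread'' that is monotone and additive in $\mu$ in the convex-order sense; more precisely, $a\,P(\mu_1+\mu_2)$ has the same distribution as the independent sum $a\,P(\mu_1)+a\,P(\mu_2)$, and splitting or merging mass across different coefficients only increases convex-order spread. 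I would isolate these as small sublemmas before assembling the two chains.

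For the \emph{first} inequality, $\sum_{i\in[N]} w_i P(v_{r,i}) \le_{cx} \sum_{S\subseteq[N]} \big(\sum_{i\in S}w_i\big)P(z_{S,r})$, the point is that the LHS takes the mass $z_{S,r}$ associated with a subset $S$ and, instead of placing it on the single coefficient $\sum_{i\in S}w_i$, scatters it as independent Poisson contributions across the individual coefficients $w_i$ for $i\in S$ (this is exactly what the constraint $v_{r,i}=\sum_{S\ni i} z_{S,r}$ encodes, using Poisson superposition to rewrite $P(v_{r,i})$ as $\sum_{S\ni i}P(z_{S,r})$ with independent summands). So the comparison reduces, block by block in $S$, to showing $\sum_{i\in S} w_i P^{(i)} \le_{cx} \big(\sum_{i\in S}w_i\big) P$, where $P,P^{(i)}$ are independent Poisson random variables all with the same mean $z_{S,r}$. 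This is a ``disaggregation decreases convex order'' statement: pooling independent Poisson mass onto a single larger coefficient is more spread out than distributing it. I would prove it either directly by a coupling/Jensen argument on the characteristic function or cumulant generating function, or by invoking a standard majorization result for sums of independent Poissons (e.g.\ along the lines of \cite{shaked2007stochastic}); then additivity of $\le_{cx}$ over the independent blocks indexed by $S$ closes the step.

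For the \emph{second} inequality, $\sum_{S\subseteq[N]} \big(\sum_{i\in S}w_i\big)P(z_{S,r}) \le_{cx} P(1)\sum_{S\subseteq[N]}\big(\sum_{i\in S}w_i\big)\chi_{S,r}$, the RHS is a mixture: with probability $z_{S,r}$ it equals $\big(\sum_{i\in S}w_i\big)P(1)$, and $\sum_S z_{S,r}=1$. Conditioning on $\chi$, the RHS is a single weighted Poisson of mean $1$, whereas the LHS is a sum of independent weighted Poissons whose means $z_{S,r}$ sum to $1$. So this is the reverse aggregation step: splitting the unit mass into independent Poisson pieces $P(z_{S,r})$ on different coefficients is \emph{less} spread out than putting it all on one randomly chosen coefficient. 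Concretely I would first observe $\sum_S (\sum_{i\in S}w_i) P(z_{S,r})$ is distributed as a compound random object, compare it with the mixture via the convexity of $t\mapsto \mathbb{E}[f(t + \cdot)]$, and use that a mixture of the ``concentrated'' laws dominates, in $\le_{cx}$, the convolution of the corresponding pieces — again a closure property of the convex order under mixing versus convolution. The main obstacle I anticipate is getting the second inequality exactly right: the direction of convex order when comparing an independent sum $\sum_S a_S P(z_{S,r})$ against the mixture $\sum_S z_{S,r}\,[\text{law of }a_S P(1)]$ is subtle, because both objects have the same mean but the comparison of variances and higher moments must be verified carefully (one must check that independence across $S$ genuinely reduces spread relative to the ``all-or-nothing'' mixture, which is where the coefficients $a_S=\sum_{i\in S}w_i$ and the normalisation $\sum_S z_{S,r}=1$ interact). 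Once both sublemmas are in place, transitivity of $\le_{cx}$ yields the full chain.
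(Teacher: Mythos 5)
Your treatment of the first inequality matches the paper's argument essentially step for step: rewrite $P(v_{r,i})$ by Poisson superposition as $\sum_{S\ni i}P(z_{S,r})$ using the constraint $v_{r,i}=\sum_{S\ni i}z_{S,r}$, regroup by $S$, and for each fixed $S$ compare $\sum_{i\in S}w_i\,P^{(i)}$ against $\bigl(\sum_{i\in S}w_i\bigr)P$ with all Poissons of the same mean $z_{S,r}$. The paper does exactly this comparison via Theorem~3.A.36 of \cite{shaked2007stochastic} (stated in the appendix as \cref{claim:storder}) with $Y=W_S\,P(z_{S,r})$, $X_i=W_S\,P(z_{S,r,i})$, and $\alpha_i=w_i/W_S$, then sums over $S$ and invokes independence and closure of $\le_{cx}$ under convolution. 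You gesture at the same citation, so this part is essentially the paper's proof.

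The gap is in the second inequality. You describe the step $\sum_{S}\bigl(\sum_{i\in S}w_i\bigr)P(z_{S,r})\le_{cx}P(1)\sum_{S}\bigl(\sum_{i\in S}w_i\bigr)\chi_{S,r}$ as following from ``a closure property of the convex order under mixing versus convolution,'' but no such general closure property exists: comparing an independent sum of weighted Poisson pieces $\sum_S a_S\,P(z_S)$ (with $\sum_S z_S=1$) against the mixture $\mathrm{Mix}_S[z_S;\,a_S\,P(1)]$ is a genuinely nontrivial inequality, not a routine consequence of closure of $\le_{cx}$ under mixtures or convolutions. You flag your own unease here, correctly --- the direction and validity must actually be proved. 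The paper does not attempt this: it simply cites Makarychev and Sviridenko \cite[Inequality~(35)]{Makarychev18}, where this comparison is established in greater generality. To repair your proposal you would either need to reprove the Makarychev--Sviridenko inequality (for instance via a marked-Poisson-process/thinning representation and a careful compound-sum comparison, checking all convex test functions rather than only variances) or, more pragmatically, cite it as the paper does.
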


\subsection{Tight Polynomial Algorithms}
Since the result in \cref{thm:opttaxes} holds also for correlated/coarse correlated equilibria, we can leverage existing polynomial time algorithms to compute such equilibria and achieve an approximation matching the corresponding price of anarchy. This is a direct consequence of \cref{thm:opttaxes} and polynomial computability of correlated equilibria \cite{JiangL15}.

\begin{corollary}
\label{cor:polyalgo} 
Consider a weighted congestion game $G$ where each latency belongs to a common set of functions $\mc{L}$  and satisfies \cref{assumptionlatency} and \cref{assumptionbounded}. Let $(\bar z, \bar v, \bar y)$ be an optimal solution of \eqref{eq:cvxprogram}, let $\bar \ell_r(x; \bar v_r)$ satisfy recursion \eqref{eq:recursion_tax} and be non-decreasing in the first argument, for $r\in\mc{R}$. Let $\rho=\sup_{\ell\in\mc{L}}\rho_\ell$. For any $\varepsilon>0$, there exists a polynomial time algorithm to compute an allocation with a cost lower than \mbox{$(\rho +\varepsilon)\cdot \min_{a\in\mc{A}}\SC(a)$.} 
\end{corollary}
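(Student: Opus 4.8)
The plan is to chain together three components: the efficiently-computable taxes of \cref{thm:opttaxes}, a polynomial-time routine for producing a coarse correlated equilibrium with polynomially bounded support, and a one-line averaging (derandomization) argument. This is precisely the ``derandomization via equilibrium computation'' alluded to in the introduction.

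First I would apply the second bullet of \cref{thm:opttaxes}. Since the latencies satisfy \cref{assumptionlatency} and \cref{assumptionbounded}, for the desired accuracy we can compute, in time polynomial in the size of $G$ and in $1/\varepsilon$, a taxation mechanism $T$ — equivalently, perceived latencies $\bar\ell_r(\cdot;\bar v_r)$ obtained from an (approximately) optimal solution of \eqref{eq:cvxprogram} via recursion \eqref{eq:recursion_tax} — whose price of anarchy, measured over coarse correlated equilibria, is at most $\rho+\varepsilon/2$. The key structural remark here is that a taxation mechanism only modifies perceived costs: it leaves the action sets $\mc{A}_i$ and the social-cost expression \eqref{eq:systemcost} untouched. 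Hence the taxed game $\bar G$ has exactly the same feasible allocations as $G$, and every coarse correlated equilibrium $\sigma$ of $\bar G$ satisfies $\mb{E}_{a\sim\sigma}[\SC(a)]\le(\rho+\varepsilon/2)\min_{a\in\mc{A}}\SC(a)$.

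Second I would compute such a $\sigma$. Weighted congestion games are succinctly representable, so a coarse correlated equilibrium can be found in polynomial time — for instance by the algorithm of \cite{JiangL15}, or, more elementarily, by running any no-regret dynamics on $\bar G$ for a polynomial number of rounds and taking the empirical distribution of play. Either way $\sigma$ is a distribution over $\mc{A}$ whose support has size polynomial in the input and $1/\varepsilon$. Finally I would derandomize by exhaustive search over this support: output $\hat a\in\argmin_{a\in\mathrm{supp}(\sigma)}\SC(a)$. Evaluating $\SC$ is polynomial and the support is polynomial, so this is efficient, and by an averaging argument $\SC(\hat a)\le\mb{E}_{a\sim\sigma}[\SC(a)]\le(\rho+\varepsilon/2)\min_{a\in\mc{A}}\SC(a)<(\rho+\varepsilon)\min_{a\in\mc{A}}\SC(a)$, which is the claim.

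The main obstacle is not any individual step but keeping the whole pipeline polynomial end-to-end. Two points must be handled with care: (i) the equilibrium-computation step must return a \emph{compact} description of $\sigma$ with a polynomially bounded support, so that the concluding enumeration is tractable — this is exactly what one gets from the output format of \cite{JiangL15} or from truncated no-regret dynamics; and (ii) the guarantee invoked from \cref{thm:opttaxes} must be the coarse-correlated version, since it is coarse correlated (not mixed Nash, which is PPAD-hard to compute) equilibria that we can actually produce efficiently — and this is precisely the regime in which \cref{thm:opttaxes} was shown to hold. Once these are in place, the argument goes through with no further difficulty.
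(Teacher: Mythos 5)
Your proposal is correct and follows essentially the same route as the paper's own argument: apply the second bullet of \cref{thm:opttaxes} to build taxes with price of anarchy at most $\rho+\varepsilon/2$, invoke \cite{JiangL15} to compute an equilibrium with polynomial-size support in polynomial time, and then enumerate the support to extract the best pure allocation via averaging. One small imprecision: \cite{JiangL15} computes an exact \emph{correlated} equilibrium (not a coarse correlated one), and that is exactly what the paper invokes — though since correlated equilibria are a special case of coarse correlated equilibria, the price-of-anarchy guarantee from \cref{thm:opttaxes} applies either way, so your argument is unaffected. Your alternative suggestion of truncated no-regret dynamics would also work, but would require a short additional argument that the resulting \emph{approximate} coarse correlated equilibrium still inherits a price-of-anarchy bound of $\rho+\varepsilon$; the paper sidesteps this by using the exact equilibrium from \cite{JiangL15}.
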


The approximation ratio presented in \cref{cor:polyalgo} can be achieved, for example, as follows. Given a desired tolerance $\varepsilon>0$, we design a taxation mechanism ensuring a price of anarchy of $\sup_{\ell\in\mc{L}} \rho_{\ell} +\varepsilon/2$, which can be done in polynomial time thanks to \cref{thm:opttaxes}. We use such taxation mechanisms, and compute an exact correlated equilibrium in polynomial time leveraging the result of Jiang and Leyton-Brown \cite{JiangL15}, who guarantees that the resulting correlated equilibrium has polynomial-size support. We then compute a correlated equilibrium and enumerate all pure strategy profiles in its support, identifying with $a^*$ that with lowest cost. Since the price of anarchy bounds of \cref{thm:opttaxes} hold for correlated equilibria, the pure strategy profile $a^*$ inherits a matching or better approximation ratio.

\section{Polynomial Latency Functions}
\label{sec:poly}
While \cref{thm:opttaxes} provides a methodology to derive taxation mechanisms with a price of anarchy matching the lower bound, it does not give an explicit expression for the taxation mechanism. In this section, we provide an explicit taxation mechanism (\cref{thm:recursionpolynomials}) for the widely studied class of polynomial latency functions $\ell_r(x)=\sum_{\d=0}^\D \ca_\d^rx^\d$ that provably satisfies recursion (\ref{eq:recursion_tax}) and has non-decreasing modified latency functions. We observe that for class $\mc{L}$ of non-negative linear combinations of $\{1,x,\dots,x^D\}$, the highest degree monomial $x^D$ determines $\rho = \max_{\ell \in \mc{L}} \rho_\ell$, which reduces to $B_{D+1}$, the $(D+1)$'st Bell number, as \mbox{we show in \cref{lem:rhopolynomial} in \cref{app:poly}.}

\begin{theorem}
\label{thm:recursionpolynomials}
Given a weighted congestion game $G$ with latency functions $\ell_r(x)=\sum_{\d=0}^\D \ca_\d^rx^\d$, $r\in\mc{R}$, consider the taxation mechanism $T$ which replaces the perceived latency functions of the agents with 
$\bar\ell_r(x;\bar v_r)=\ell_r(x)+\tau_r(x;\bar v_r)=\sum_{\d=0}^\D \ca_\d^rT_\d(x;\bar v_r)$, where  $(\bar z, \bar v, \bar y)$ solves the linear program in \eqref{eq:cvxprogram} and $T_{\d}(x;v)$ is defined as 
\[
T_\d(x;v)=\sum_{\k=0}^\d \mya{\k}{\d}(v)\cdot x^\k \qquad \text{with}\qquad
\begin{cases}
\mya{\d}{\d}(v)&=1\\
\mya{\j-1}{\d}(v)&=	\sum_{\k=\j}^d 
\binom{\k}{\j}
\mya{\k}{\d}(v)
\myb_{\k-\j}(v)\qquad \j\in[\d]
\end{cases}
\] 
and $\myb_{j}(v)=\sum_i v_i w_i^{j+1}$. 

\begin{itemize}[leftmargin=5.5mm]
	\item[$\bullet$] Then, the price of anarchy of $T$ is upper bounded by $B_{D+1}$.
	\item[$\bullet$] Further, for any choice of $\varepsilon>0$, one can compute \emph{in polynomial time} taxes whose price of anarchy is no-higher than $B_{D+1}+\varepsilon$.
	\end{itemize}
\end{theorem}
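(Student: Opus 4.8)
The plan is to instantiate the general framework of \cref{thm:opttaxes} by verifying that the explicitly given latency modification $\bar\ell_r(x;\bar v_r)=\sum_{\d=0}^\D \ca_\d^r T_\d(x;\bar v_r)$ satisfies the two hypotheses required there: that it solves the recursion \eqref{eq:recursion_tax}, and that it is non-decreasing in its first argument. Once both are established, both bullet points follow immediately: the first from the first bullet of \cref{thm:opttaxes} together with $\rho=\sup_{\ell\in\mc{L}}\rho_\ell=B_{D+1}$ (which is \cref{lem:rhopolynomial}), and the second from the second bullet of \cref{thm:opttaxes} verbatim, since the polynomial taxes are obtained from an exact/approximate solution of the same LP \eqref{eq:cvxprogram}.

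First I would reduce the verification of \eqref{eq:recursion_tax} to a per-monomial statement. Since $c_r(x)=x\ell_r(x)=\sum_\d \ca_\d^r x^{\d+1}$ and $p_r(\bar v_r)=\mb{E}[c_r(\sum_i w_i P(\bar v_{r,i}))]=\sum_\d \ca_\d^r \mb{E}[(\sum_i w_i P(\bar v_{r,i}))^{\d+1}]$, and since $\bar\ell_r$ is defined as a $\ca_\d^r$-linear combination of the $T_\d$, by linearity it suffices to show that for each fixed degree $\d$,
\be
x\,T_\d(x;v)-\sum_i w_i v_i\, T_\d(x+w_i;v) = x^{\d+1}-\mb{E}\Bigl[\Bigl(\textstyle\sum_i w_i P(v_i)\Bigr)^{\d+1}\Bigr]
\label{eq:permonomial}
\ee
for all $x,v\ge0$ (with the boundary conditions $T_\d(x;0)=x^\d$ and $T_\d(0;v)=0$, which are visible from $\mya{\k}{\d}(0)=\delta_{\k\d}$ and the absence of a constant term). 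The right-hand side of \eqref{eq:permonomial} should be expanded using the factorial-moment / Touchard-polynomial structure of $\sum_i w_i P(v_i)$: writing $\mb{E}[(\sum_i w_i P(v_i))^{\d+1}]$ via the multinomial theorem and the known moments of independent Poissons produces a polynomial in $x=0$... more precisely the quantities $\myb_j(v)=\sum_i v_i w_i^{j+1}$ appear exactly as the relevant cumulant-type sums, which is why they enter the recursion for $\mya{\j-1}{\d}$. The cleanest route is probably induction on $\d$: assuming \eqref{eq:permonomial} for all smaller degrees, expand $x\,T_\d(x;v)=\sum_\k \mya{\k}{\d}(v)x^{\k+1}$ and $\sum_i w_i v_i T_\d(x+w_i;v)=\sum_\k \mya{\k}{\d}(v)\sum_i w_i v_i (x+w_i)^\k$, apply the binomial theorem to $(x+w_i)^\k$, collect powers of $x$, and check that the recursion defining $\mya{\j-1}{\d}$ is exactly what makes all intermediate powers of $x$ (degrees $1,\dots,\d$) cancel against the contributions they should, leaving $x^{\d+1}$ (from the $\mya{\d}{\d}=1$ term) minus a $v$-only constant that must be shown to equal $\mb{E}[(\sum_i w_i P(v_i))^{\d+1}]$; the latter identity is itself a moment computation that can be organized via Dobiński-type / Bell-polynomial identities, or verified directly by matching the recursive structure of Poisson moments.

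Second, for monotonicity I would argue that each $\bar\ell_r(x;\bar v_r)$ is non-decreasing on $\mb{R}_{\ge0}$. Since the coefficients $\ca_\d^r\ge0$, it suffices to show each $T_\d(x;v)$ is non-decreasing in $x\ge0$. One way is to exhibit $T_\d(x;v)$ in a manifestly monotone closed form — e.g. as $\mb{E}[(x+\text{something nonnegative})^{\,\text{stuff}}]$ or as a polynomial with non-negative coefficients once re-expanded — which would follow if all $\mya{\k}{\d}(v)\ge0$; and indeed $\myb_j(v)\ge0$ for $v\ge0$, so the recursion $\mya{\j-1}{\d}(v)=\sum_{\k\ge\j}\binom{\k}{\j}\mya{\k}{\d}(v)\myb_{\k-\j}(v)$ propagates non-negativity downward from $\mya{\d}{\d}=1$. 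Hence $T_\d(x;v)=\sum_\k \mya{\k}{\d}(v)x^\k$ has non-negative coefficients and no... well, it may have a constant term $\mya{0}{\d}(v)$, but that does not affect monotonicity; so $T_\d(\cdot;v)$ is non-decreasing on $\mb{R}_{\ge0}$, and thus so is $\bar\ell_r$. This closes the gap to apply \cref{thm:opttaxes}, and the two bullet points follow.

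I expect the main obstacle to be the algebraic identity in \eqref{eq:permonomial}, specifically matching the $v$-only constant term against the true $(\d+1)$-st moment of $\sum_i w_i P(v_i)$ and checking that the triangular recursion for the $\mya{\k}{\d}$ is precisely the one that cancels all intermediate powers of $x$. This is a bookkeeping-heavy induction where the combinatorial coefficients $\binom{\k}{\j}$ and the power sums $\myb_j$ must line up exactly with the moment expansion of independent Poisson variables; getting the indices and the boundary terms ($x=0$, $v=0$) right is the delicate part, whereas the monotonicity and the invocation of \cref{thm:opttaxes} and \cref{lem:rhopolynomial} are routine.
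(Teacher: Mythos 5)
Your plan matches the paper's at a structural level: instantiate \cref{thm:opttaxes} by checking the recursion \eqref{eq:recursion_tax} and monotonicity, reduce \eqref{eq:recursion_tax} to a per-monomial identity, substitute $T_\d$, apply the binomial theorem to $(x+w_i)^\k$, collect powers of $x$, and observe that the triangular recursion for $\mya{\j-1}{\d}$ is precisely what annihilates the coefficients of $x^\j$ for $\j\in[\d]$. The monotonicity argument (non-negativity of all $\mya{\k}{\d}(v)$ propagated downward from $\mya{\d}{\d}=1$ via the non-negative $\myb_j$) is the paper's too.

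However, there is a genuine gap where you end. After the intermediate powers cancel, you are left with the degree-zero identity
\[
\sum_{\k=0}^{\d} \mya{\k}{\d}(v)\,\myb_{\k}(v) \;=\; \mb{E}\Bigl[\Bigl(\textstyle\sum_i w_i P(v_i)\Bigr)^{\d+1}\Bigr],
\]
and your proposal treats this as a side remark (``a moment computation that can be organized via Dobi\'nski-type / Bell-polynomial identities, or verified directly''). In the paper this is the heart of the matter and occupies two non-trivial lemmas: \cref{lem:Md} first establishes that the moments $M_\d=\mb{E}[(\sum_i w_i P(v_i))^\d]$ satisfy the binomial recursion $M_{\n+1}=\sum_{\k=0}^{\n}\binom{\n}{\k}\myb_{\n-\k}M_\k$ (proved via the moment generating function), and \cref{lem:degree0equal} then proves $\sum_{\k}\mya{\k}{\d}\myb_\k=M_{\d+1}$ by first showing that the $\mya{\k}{\d}$ coincide with an alternative sequence $\tildemya{\k}{\d}$ defined by a ``transposed'' recursion, and then running an induction over $\d$ using that transposed recursion together with the $M_\d$ recursion. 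Without this, the constant term of your expansion is not shown to vanish and \eqref{eq:recursion_tax} is not established, so \cref{thm:opttaxes} cannot be invoked. You correctly identify this as the hard step, but identifying it is not the same as closing it --- as written, the proof is incomplete precisely at the point the paper expends the most effort.

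Two smaller points. First, you claim $T_\d(0;v)=0$ by ``absence of a constant term,'' but $T_\d$ does have a constant term $\mya{0}{\d}(v)\neq 0$ for $v\neq 0$ (you in fact notice this later yourself); the boundary condition $\bar\ell_r(0;v_r)=0$ in \eqref{eq:recursion_tax} is not satisfied by these $T_\d$, and the paper's proof does not rely on it (the case $x_r=0$ never enters the smoothness chain). Second, you suggest establishing the per-monomial identity by ``induction on $\d$,'' but the paper's coefficient-matching argument is direct; the inductions live entirely inside the two auxiliary lemmas proving the degree-zero identity.
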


We note that for linear latencies $\ell(x)=x$, we obtain $T_1(x;v) = \sum_{i}v_iw_i + x$, and for quadratic latencies $\ell(x)=x^2$, we obtain $T_2(x;v) = \left(\sum_i v_iw_i\right)^2+2\left(\sum_i v_i w_i^2\right)+\left(\sum_i v_iw_i\right)x+x^2$. 

To illustrate how significantly the taxation mechanism improves the Price of Anarchy compared to the setting without interventions, we provide a comparison for polynomial latencies in Table \ref{table:poacomparison} in \cref{sec:poa-comparison}. 
\\

{{\bf Proof idea.} To prove the theorem, we will show that the modified latency functions $(\bar \ell_r)_{r\in\mc{R}}$ satisfy the conditions of \cref{thm:opttaxes}, i.e., satisfy the recursion in \eqref{eq:recursion_tax} and are non-decreasing. This will be sufficient to apply the result in \cref{thm:opttaxes} and thus obtain the two desired statements. Non-decreasingness follows readily by their expression. Regarding the satisfaction of \eqref{eq:recursion_tax}, there are three main ideas we employ. First, we observe that \eqref{eq:recursion_tax} can be proved ``monomial by monomial'', if we assume that the modified latencies $\bar \ell_r$ are obtained by linear combinations of the {\it same} coefficients defining $\ell_r$, that is, $\bar\ell_r(x;v)=\sum_{\d=0}^\D \ca_\d^rT_\d(x;v)$. Second, we note that for each fixed monomial it suffices to consider functions $T_d(x;v)$ that are polynomial of maximum degree $d$, that is $T_\d(x;v)=\sum_{\k=0}^\d \mya{\k}{\d}(v)\cdot x^\k$. Finally, we utilize this ansatz to derive a recursion for the coefficients $(\mya{\j}{\d}(v))_{\j=0}^\d$ that ensures \eqref{eq:recursion_tax} is indeed satisfied. 
This non-trivial and most technical part of the proof utilizes two independent lemmas.}

\begin{proof}
Regarding non-decreasingness: each term appearing in the recursion defining the coefficients $(\mya{\j}{\d}(\bar v_r))_{\j=0}^\d$  is non-negative since $\bar v_r$ is so. Thus all modified latencies $\bar\ell _r(x;\bar v_r)$ must be non-decreasing.
\medskip

Regarding the modified latency functions satisfying \eqref{eq:recursion_tax}: in the following we will prove \eqref{eq:recursion_tax} ``monomial by monomial''. That is, we will show that for all $x\ge0$, for all non-negative parameters vectors $v$ and weights $(w_1,\dots, w_N)$, and for all $\d\in[\D]$, it is 
\be
\label{eq:monomial-by-monomial}
x T_{\d}(x;v)-\sum_i  v_i{w_i}T_{\d}(x+w_i;v)=x^{\d+1}-M_{\d+1}(v),
\ee
where, to ease the notation, we have defined $M_{\d+1}(v)=\mb{E}\left[(\sum_i P_iw_i)^{\d+1} \right]$ with $P_i\sim\text{Poi}\left({v_{i}}\right)$ independent.\footnote{Note that $M_{\d+1}(v)$ is obtained by evaluating \eqref{eq:pdef} for a pure monomial latency $x^\d$. %
}
Evaluating the previous equality for $x=x_r$, $v_i=\bar v_{r,i}$, multiplying it by $\ca_{\d}^r$, and summing over $\d\in\{0,\dots,\D\}$, $r\in \mc{R}$, will yield \eqref{eq:recursion_tax}, thanks to linearity of expectation.

Hence, we turn our attention to proving \eqref{eq:monomial-by-monomial}. Towards this goal, we substitute the expression for $T_d(x;v)$, and group the terms depending on the degree with which $x$ appears. To do so, we utilize the binomial expansion for $(x+w_i)^\k=\sum_{\j=0}^\k \binom{\k}{\j} x^\j w_i^{\k-\j}$. To ease the notation, we drop the functional dependence of $\mya{\j}{\d}(v)$ and $M_{\d+1}(v)$ on $v$. We are left with
\[
x^{\d+1}(\mya{d}{d}-1) + \sum_{\j=1}^\d \mya{\j-1}{\d}x^\j
- \sum_{\k=0}^\d \sum_{\j=0}^\k \mya{\k}{\d}\binom{k}{j}x^\j \sum_i v_i w_i^{\k-\j{+1}}+M_{\d+1}=0,%
\]
which we rewrite exchanging the order of summation in the third term and using the definition of $\myb_j$
\[
(\mya{d}{d}-1)x^{\d+1} + \sum_{\j=1}^\d
\left(\mya{\j-1}{\d}
- \sum_{\k=\j}^\d \mya{\k}{\d}\binom{k}{j} \myb_{\k-\j}\right)x^\j - 
\sum_{\k=0}^\d \mya{\k}{\d} \myb_{\k}
+M_{\d+1}=0.
\]
By equating each coefficient multiplying the monomial $x^\j$, $\j \in[d+1]$, to zero, we obtain the recursion defining $(\mya{\j}{\d})_{\j=0}^\d$ given in the theorem statement. However this does not complete the proof as we still need to verify that the terms of order zero cancel out. Therefore, we are left to show that 
\be 
\sum_{\k=0}^\d \mya{\k}{\d} \myb_{\k}=M_{\d+1}.
\label{eq:degree0equal}
\ee
Proving this claim is non-trivial and is shown in two separate lemmas which might be of independent interest. In particular, \cref{lem:Md} (in \cref{app:poly}) will show that $M_{\d+1}$ defined above can be computed through a recursion akin to the binomial-based recursion defining the Bell numbers \cite[Eq 1.6.13]{wilf2005generatingfunctionology}. The crux of the argument is then developed in the following \cref{lem:degree0equal}, where this recursive definition is exploited to show that \eqref{eq:degree0equal} holds. This lemma generalizes a result from \cite{gould2007linear} where authors show that Bell numbers can be constructed by summing a sequence of numbers (different from the commonly-employed Stirling numbers of the second kind) naturally arising in combinatorics \cite[Eq 4.2]{gould2007linear}. 
\end{proof}

\begin{lemma}
\label{lem:degree0equal}
	Fix any $\d\in \mb{N}_0$, $v\in\mb{R}^n$ and let $M_{\d}$ and $(\mya{\d}{\j})_{\j=0}^\d$ be defined as in \cref{thm:recursionpolynomials}. Then \eqref{eq:degree0equal} holds.
\end{lemma}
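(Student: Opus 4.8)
The plan is to prove the identity $\sum_{\k=0}^\d \mya{\k}{\d} \myb_{\k} = M_{\d+1}$ by induction on $\d$, leveraging the two auxiliary facts mentioned in the text: (i) a Bell-number-like recursion for the moments $M_{\d+1}(v) = \mb{E}[(\sum_i P_i w_i)^{\d+1}]$ with $P_i\sim\text{Poi}(v_i)$ independent (this is \cref{lem:Md}), and (ii) the recursion defining the coefficients $\mya{\j}{\d}$, namely $\mya{\d}{\d}=1$ and $\mya{\j-1}{\d} = \sum_{\k=\j}^\d \binom{\k}{\j}\mya{\k}{\d}\myb_{\k-\j}$. The key point to establish first is the precise shape of the moment recursion: for a sum of independent Poisson variables scaled by weights, differentiating the moment generating function $\mb{E}[e^{s\sum_i P_i w_i}] = \exp(\sum_i v_i(e^{s w_i}-1))$ yields $M_{\d+1} = \sum_{\k=0}^{\d} \binom{\d}{\k} M_{\d-\k}\,\big(\sum_i v_i w_i^{\k+1}\big) = \sum_{\k=0}^\d \binom{\d}{\k} \myb_\k M_{\d-\k}$, with $M_0 = 1$. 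This is the weighted analogue of the Bell recursion $B_{n+1}=\sum_k \binom{n}{k}B_k$, and I would either cite it from \cref{lem:Md} or derive it in one line from the mgf.

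With that in hand, the inductive step proceeds as follows. Assume the identity holds for all indices below $\d$, i.e.\ $M_{m+1} = \sum_{\k=0}^m \mya{\k}{m}\myb_\k$ for $m<\d$. Starting from the left-hand side $\sum_{\k=0}^\d \mya{\k}{\d}\myb_\k$, I would separate the top term $\mya{\d}{\d}\myb_\d = \myb_\d$ and rewrite each lower coefficient $\mya{\k}{\d}$ for $\k<\d$ using its defining recursion, $\mya{\k}{\d} = \sum_{\llabel=\k+1}^\d \binom{\llabel}{\k+1}\mya{\llabel}{\d}\myb_{\llabel-\k-1}$. Substituting and exchanging the order of summation turns $\sum_{\k=0}^{\d-1}\mya{\k}{\d}\myb_\k$ into a double sum over $\llabel$ and $\k$ of the form $\sum_{\llabel}\mya{\llabel}{\d}\sum_{\k}\binom{\llabel}{\k+1}\myb_{\llabel-\k-1}\myb_\k$. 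The inner sum, after reindexing $\k\mapsto\k$ so it runs over a convolution of the $\myb$'s weighted by binomial coefficients, should be recognizable — via the moment recursion above applied at level $\llabel-1$ together with the inductive hypothesis — as essentially $M_{\llabel}$ minus a boundary term. The goal is to collapse everything so that the whole expression telescopes to $\sum_{\k}\binom{\d}{\k}\myb_\k M_{\d-\k} = M_{\d+1}$.

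The main obstacle I anticipate is precisely this combinatorial bookkeeping: matching the binomial coefficients $\binom{\llabel}{\k+1}$ arising from the coefficient recursion against the $\binom{\d}{\k}$ appearing in the moment recursion, and correctly tracking which convolutions of the $\myb$'s reassemble into lower moments $M_m$ via the inductive hypothesis. A clean way to sidestep some of this is to pass to generating functions: define $T_\d(x;v) = \sum_\k \mya{\k}{\d}x^\k$ (already in the statement) and note that \eqref{eq:monomial-by-monomial}, which we have already reduced to \eqref{eq:degree0equal}, together with the established recursions, should be equivalent to a single functional identity on the bivariate generating function $\sum_\d T_\d(x;v) t^\d/\d!$ paired with $\exp(\sum_i v_i(e^{tw_i}-1))$; checking \eqref{eq:degree0equal} then amounts to evaluating that identity at $x=0$. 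I would attempt the direct double-sum manipulation first and fall back on the generating-function route if the index juggling becomes unwieldy; either way, the cited result of Gould and Quaintance~\cite{gould2007linear} on constructing Bell numbers from non-Stirling sequences provides the unweighted template that the argument must specialize to when all $w_i=1$ and $\sum_i v_i = 1$.
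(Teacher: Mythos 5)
There is a genuine gap, and you already brush against it: the ``main obstacle'' you anticipate is in fact the crux, and the paper needs a separate structural lemma to cross it. Your plan is to substitute the defining recursion $\mya{\k}{\d} = \sum_{\llabel>\k}\binom{\llabel}{\k+1}\mya{\llabel}{\d}\myb_{\llabel-\k-1}$ into $\sum_{\k}\mya{\k}{\d}\myb_{\k}$ and then recognize the result using the inductive hypothesis. But that recursion only lowers the \emph{subscript} $\k$, never the superscript $\d$: every coefficient that survives the substitution is still of the form $\mya{\llabel}{\d}$ with the full top index $\d$. The inductive hypothesis, by contrast, concerns $\mya{\k}{m}$ and $M_{m+1}$ for $m<\d$, so there is nothing for it to attach to. Relatedly, the inner sum you isolate is a convolution $\sum_{\k}\binom{\llabel}{\k+1}\myb_{\llabel-\k-1}\myb_{\k}$ of two $\myb$'s, whereas the moment recursion from \cref{lem:Md} is a $\myb$-$M$ convolution $M_{\llabel}=\sum_{\k}\binom{\llabel-1}{\k}\myb_{\llabel-1-\k}M_{\k}$; these are structurally different objects and do not match ``up to a boundary term.''

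What the paper does to unblock this is exactly the missing idea: it first proves (by a double induction over the distance $\d-\j$) that the coefficients $\mya{\j}{\d}$ satisfy a \emph{second}, equivalent recursion $\tildemya{\j}{\d}=\sum_{\p=\j}^{\d-1}\binom{\d}{\p+1}\tildemya{\j}{\p}\myb_{\d-1-\p}$, which descends in the \emph{superscript} while fixing the subscript. Only with that alternative recursion in hand can the induction on $\d$ go through, because now $\sum_{\j}\tildemya{\j}{\d}\myb_{\j}$ unfolds into quantities $\tildemya{\j}{\p}$ with $\p<\d$, which the inductive hypothesis reassembles into $M_{\k}$'s and which the recursion of \cref{lem:Md} then collapses to $M_{\d+1}$. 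Your generating-function fallback is plausible in spirit and is certainly in the vein of Gould--Quaintance, but as stated it is a sketch of a sketch; the concrete manipulation you propose will stall at the point described above unless you first establish (or otherwise circumvent) this equivalence of recursions.
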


\begin{proof}
We prove the statement in two parts. First, we show that $(\mya{\d}{\j})_{\j=0}^\d$ defined in \cref{thm:recursionpolynomials} can be equivalently defined through a different recursion. We then exploit this equivalent definition of $(\mya{\d}{\j})_{\j=0}^\d$ and \cref{lem:Md} (in \cref{app:poly}) to conclude.
\medskip

\noindent $\triangleright$ {\it First part:}~Given $\d\in \mb{N}_0$, $v\in\mb{R}^n$, we define $(\tildemya{\d}{\j})_{\j=0}^\d$ as
\[
\begin{cases}
\tildemya{\d}{\d}(v)&=1\\
\tildemya{\j}{\d}(v)&=	\sum_{\p=\j}^{d-1} 
\binom{\d}{\p+1}
\tildemya{\j}{\p}(v)
\myb_{\d-1-\p}(v)\qquad \j\in[0,\d],
\end{cases}
\] 
and show that the sequences $(\mya{\d}{\j})_{\j=0}^\d$ and $(\tildemya{\d}{\j})_{\j=0}^\d$ are identical. We do so by induction. Towards this goal, let $\mydelta = \{(j,d)
\in\mb{N}_0\times\mb{N}_0~\text{s.t.}~\j\le \d\}$. We first verify that equality holds for all $(j,d)\in\mydelta$ such that $j=d$. This is immediate to see since $\tildemya{\d}{\d}=1=\mya{\d}{\d}$. We then assume that the property $\tildemya{\j}{\d}=\mya{\j}{\d}$ holds for all $(\j,\d)\in \mydelta$ with $ \d-\ell \le j\le d$ and show that the property holds for all $(\j,\d)\in\mydelta$ with $ \d-(\ell+1) \le j\le d$. Once this is done, induction over $\ell\ge 0$ allows us to conclude. To complete the inductive step, we now consider $\mya{\j}{\d}$ with $\j=\d-(\ell+1)$, and utilize both its recursive definition and the inductive assumption to write 
\allowdisplaybreaks
\begin{align*}
\mya{\j}{\d}&=
\sum_{\k=\j+1}^\d \binom{\k}{\j+1} \mya{\k}{\d} \myb_{\k-(\j+1)}\\
(\text{$\mya{\k}{\d}=\tildemya{\k}{\d}$ by ass.%
})&=
\sum_{\k=\j+1}^{\d-1} \binom{\k}{\j+1} \tildemya{\k}{\d} \myb_{\k-(\j+1)}
{+\binom{\d}{\j+1} \tildemya{\d}{\d} \myb_{\d-(\j+1)}}
\\
(\text{$\tildemya{\d}{\d}=1$; def. of $\tildemya{\k}{\d}$)}&=
\sum_{\k=\j+1}^{\d-1} \sum_{\p=\k}^{\d-1}\binom{\k}{\j+1}
\binom{\d}{\p+1} \myb_{\k-(\j+1)}\myb_{(\d-1)-\p}
\tildemya{\k}{\p}
{+\binom{\d}{\j+1}  \myb_{\d-(\j+1)}}
\\
(\text{$\tildemya{\k}{\d}=\mya{\k}{\d}$ by ass. %
})
&=
\sum_{\k=\j+1}^{\d-1} \sum_{\p=\k}^{\d-1}\binom{\k}{\j+1}
\binom{\d}{\p+1} \myb_{\k-(\j+1)}\myb_{(\d-1)-\p}
\mya{\k}{\p}
{+\binom{\d}{\j+1} \myb_{\d-(\j+1)}}
\\
&=\sum_{(\k,\p)\in A_{\j,\d}}  \binom{\k}{\j+1}
\binom{\d}{\p+1} \myb_{\k-(\j+1)}\myb_{(\d-1)-\p}
\mya{\k}{\p}
{+\binom{\d}{\j+1} \myb_{\d-(\j+1)}},
%
\end{align*}
where we defined $A_{\j,\d}=\{(\k,\p)\in \Delta~\text{s.t.}~\k\ge \j+1, \p\le \d-1\}$. We proceed similarly for $\tildemya{\j}{\d}$, and write 
\begin{equation*}
\begin{split}
\tildemya{\j}{\d}&=
\sum_{\p=\j}^{\d-1} \binom{\d}{\p+1} \tildemya{\j}{\p} \myb_{(\d-1)-\p}\\
(\text{$\tildemya{\j}{\p}=\mya{\j}{\p}$ by ass.%
})&=
\sum_{\p=\j+1}^{\d-1} \binom{\d}{\p+1} \mya{\j}{\p} \myb_{(\d-1)-\p}
{+\binom{\d}{\j+1} \mya{\j}{\j} \myb_{\d-(\j+1)}}
\\
(\text{$\mya{\j}{\j}=1$; def. of $\mya{\j}{\p}$)}&=
\sum_{\p=\j}^{\d-1} \sum_{\k=\j+1}^{\p}\binom{\d}{\p+1}
\binom{\k}{\j+1}\myb_{(\d-1)-\p}  \myb_{\k-(\j+1)} 
\mya{\k}{\p}
{+\binom{\d}{\j+1}  \myb_{\d-(\j+1)}}
\\
&=\sum_{(\k,\p)\in A}  \binom{\k}{\j+1}
\binom{\d}{\p+1} \myb_{\k-(\j+1)}\myb_{(\d-1)-\p}
\mya{\k}{\p}
{+\binom{\d}{\j+1} \myb_{\d-(\j+1)}}.
\end{split}
\end{equation*}
This completes the proof as the two expressions match, as desired.
\medskip

\noindent $\triangleright$~{\it Second part}: Owing to the previous result, it is then sufficient to show that 
\[\sum_{\k=0}^\d \tildemya{\k}{\d} \myb_{\k}=M_{\d+1}.\]
Towards this goal, we rewrite the recursion for $\tildemya{\j}{\d}$ summing over $i=\d-\p-1$, instead of $p$. Hence, we have 
$\tildemya{\j}{\d}=\sum_{\i=0}^{\d-\j-1}\binom{\d}{\i}\tildemya{\j}{\d-\i-1} \myb_{\i}$. In the remainder of the proof, we use the recursion defining $M_d$ from \cref{lem:Md}.

Since we want to prove the claim by induction over $\d$, we first verify the claim for $d_0=0$. Using the recursion for $M_{\d+1}$ we obtain $M_{\d_0+1}=M_{1}=\myb_{0}M_0=\myb_0$, which equals the desired left-hand side $\sum_{\k=0}^{\d_0}\tildemya{\k}{\d_0} \myb_{\k}=\tildemya{0}{0} \myb_{0}=\myb_0$.

To carry out the inductive step, we now assume that 
\[\sum_{\j=0}^{\k-1} \tildemya{\j}{\k-1} \myb_{\j}=M_{\k}\quad \text{for all}\quad \k\le \d,\]
and want to show that the previous equation holds also for $\k=\d+1$. Towards this goal, we start from $M_{d+1}$ and use both its recursive definition and the inductive assumption 
\allowdisplaybreaks
\begin{align*}
M_{\d+1}&=
\myb_\d M_0+\sum_{\k=1}^{\d}\binom{\d}{\k} \myb_{\n-\k} M_\k\\
(\text{inductive assumption})&=
\myb_\d M_0+\sum_{\k=1}^{\d}\sum_{\j=0}^{\k-1}\binom{\d}{\k}\tildemya{\j}{\k-1}\myb_{\j}\myb_{\n-\k}\\
(\text{swap sums and let $\i=n-\k$})&=
\myb_\d M_0+\sum_{\j=0}^{\d-1}\sum_{\i=0}^{\d-\j-1} \binom{\d}{\d-\i}\tildemya{\j}{\d-1-\i}\myb_\i \myb_\j\\
(\text{$M_0=1=\tildemya{\d}{\d}$, and recursive definition of $\tildemya{\j}{\d}$})&=
\myb_\d\tildemya{\d}{\d}+\sum_{\j=0}^{\d-1}\tildemya{\j}{\d}\myb_j\\
&= \sum_{\j=0}^\d \tildemya{\j}{\d}\myb_j. \qedhere
\end{align*}
\end{proof}

\section{\NP-hardness of Approximation and Limitations of Unfair Interventions}
\label{sec:hardness}
Having shown a fair, efficiently-computable taxation mechanism with best-possible price of anarchy amongst all fair interventions, we investigate whether unfair interventions can achieve strictly better price of anarchy than their fair counterpart. Perhaps surprisingly, the answer is in the negative when polynomial computability is required for the intervention and the equilibrium concept. To obtain this result, we first show that minimizing the social cost is \NP-hard to approximate below a factor identical to the lower bound on the price of anarchy of fair interventions (\cref{thm:hardness}). Combined with the polynomial computability of correlated equilibria \cite{JiangL15}, and using \cite{Rough_barrier}, this immediately gives the following result:

\begin{corollary}
In weighted congestion game $G$ where each latency belongs to a common set of functions $\mc{L}$  and satisfies \cref{assumptionlatency}%
, no polynomially computable intervention, fair or unfair, achieves a price of anarchy less than $\rho_\ell = \sup_{\ell \in \mc{L}}$ for correlated equilibria, assuming \mbox{\Pclass $\neq$ \NP}.
\end{corollary}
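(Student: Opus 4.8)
The plan is to derive the statement as the contrapositive of the inapproximability result \cref{thm:hardness}: if some polynomially-computable intervention achieved a price of anarchy strictly below $\rho := \sup_{\ell\in\mc{L}}\rho_\ell$ for correlated equilibria, then combining it with a polynomial-time routine for computing correlated equilibria would yield a polynomial-time $(\rho-\varepsilon)$-approximation algorithm for $\mincon$, contradicting \NP-hardness. This is precisely the reduction from near-optimal equilibrium computation to optimization of \cite{Rough_barrier}, instantiated here with interventions and correlated equilibria. (If $\rho=\infty$ the statement asserts that no polynomially-computable intervention has bounded price of anarchy, which follows by the same argument, or is already implied by \cref{th:approximation} in the fair case; assume henceforth $\rho<\infty$.)

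In detail, I would argue by contradiction: suppose an intervention $I$ computable in polynomial time has price of anarchy $\rho'<\rho$ for correlated equilibria over $\mc{G}$, and fix $\varepsilon=(\rho-\rho')/2>0$. Given an instance $G\in\mc{G}$, the algorithm (i) computes $\bar G=I(G)$ in polynomial time; (ii) computes an \emph{exact} correlated equilibrium $\sigma$ of $\bar G$ with polynomially-bounded support, using the algorithm of Jiang and Leyton-Brown \cite{JiangL15}; and (iii) enumerates the pure profiles in the support of $\sigma$ and returns the one, $a^\star$, of least social cost. Since interventions reshape only the players' perceived costs and leave the social-cost objective \eqref{eq:systemcost} untouched, $\SC(\cdot)$ is always evaluated on the original latencies; moreover $\SC$ is linear in the coordinates of $\sigma$, so $\SC(a^\star)\le\mb{E}_{a\sim\sigma}[\SC(a)]$, and the latter, being a correlated-equilibrium social cost of $\bar G$, is at most $\poa(I)\cdot\mincost(G)=\rho'\mincost(G)\le(\rho-\varepsilon)\mincost(G)$. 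Hence $a^\star$ is a polynomial-time $(\rho-\varepsilon)$-approximation of $\mincon$, contradicting \cref{thm:hardness} under \Pclass $\neq$ \NP.

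I expect the only delicate point to be step (ii): the algorithm of \cite{JiangL15} computes correlated equilibria in polynomial time for \emph{compactly representable} games, whereas $\bar G$ need not itself be a congestion game. This is addressed either by folding polynomial computability of correlated equilibria of $\bar G$ into the hypothesis (which is the intended reading, since the result concerns settings where efficiency is demanded of both the intervention \emph{and} the equilibrium concept), or by observing that all standard interventions---taxation, cost-sharing, public signaling, and those merely restricting action sets---keep $\bar G$ within a class to which \cite{JiangL15} applies. A minor point worth noting is that the price-of-anarchy guarantee is a worst-case bound over \emph{all} correlated equilibria, so it a fortiori bounds the cost of the particular equilibrium returned by \cite{JiangL15}; no existence or selection issue arises, since correlated equilibria always exist in finite games. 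With these observations in place, the argument is immediate.
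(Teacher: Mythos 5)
Your proposal is correct and follows the same approach the paper takes: \cref{thm:hardness} plus polynomial-time computability of correlated equilibria (Jiang and Leyton-Brown) plus Roughgarden's barrier argument, spelled out via the standard contrapositive. Your observation about the delicate point in step (ii) is consistent with the paper's own phrasing that ``polynomial computability is required for the intervention \emph{and} the equilibrium concept,'' which confirms that the intended reading is the one you identify.
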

Note that the polynomial time requirement is necessary as otherwise the intervention could just compute the optimum and prescribe the users to play accordingly.   

We further point out that minimizing the social cost in weighted congestion games belongs to the more general class of \emph{Optimization Problems with Diseconomies of Scale} studied by Makarychev and Sviridenko \cite{Makarychev18}. Hence, the ensuing theorem not only establishes latency function dependent hardness results for weighted congestion games, but also is the first to show tightness for the broader class of optimization problems with diseconomies of scale. Specifically, our result shows that the approximation algorithm in \cite{Makarychev18} is the best possible already for very simple settings, i.e., optimization problems with diseconomies of scale where the constraint set is a simplex, all resources have the same latency function, and all weights are equal. This was not known, and we believe may be of independent interest. 

\begin{theorem}
\label{thm:hardness}
In weighted congestion games where all resources feature the same latency function $\ell:\mb{R}_{\mygezero}\rightarrow\mb{R}_{\mygezero}$ and $\ell$ is non-decreasing semi-convex, 
\mincon~is \NP-hard to approximate within any factor smaller than $\rho_\ell$ as defined in (\ref{eq:approxfact}), i.e.
\[ \rho_\ell=\sup_{t>0}\frac{\mb{E}_{P\sim\text{Poi}(1)} [tP\ell(tP)]}{t\ell(t)}. \]

{If $\rho_\ell=\infty$, then \mincon~is \NP-hard to approximate within any finite factor.
  These NP-hardness results hold even if all players have the same weight $w$.
 }
\end{theorem}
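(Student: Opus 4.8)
The plan is to establish hardness via an approximation-preserving reduction from the corresponding problem for \emph{unweighted} congestion games, whose tight inapproximability factor is known from \cite{paccagnan2021optimal}. The crucial observation, already flagged in the Techniques section, is that the worst-case instances for unweighted congestion games use identical players, so the reduction need only map instances with identical unit weights to weighted instances with identical weights $w$. First I would recall from \cite{paccagnan2021optimal} that, for unweighted congestion games with a common latency $g$ belonging to a class closed under abscissa scaling, $\mincon$ is $\NP$-hard to approximate within any factor below $\sup_{n\in\mb{N}}\frac{\mb{E}_{P\sim\text{Poi}(1)}[n\cdot\frac{P}{n}g(\frac{P}{n}\cdot n)]}{\dots}$ --- more to the point, I would invoke \cref{lem:factor-weighted-unweighted} (the lemma asserting that the inapproximability factors in the weighted and unweighted settings coincide when $\mc{L}$ is closed under abscissa scaling) together with the fact that the common-latency hardness factor for unweighted games is exactly $\rho_\ell$ restricted to integer arguments, which by the supremum structure of $\rho_\ell$ in \eqref{eq:approxfact} coincides with $\rho_\ell$ itself in the relevant regime.

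The key steps, in order, are as follows. (i) Fix the target latency $\ell$ and identify the optimizing (or near-optimizing) argument $t^\star$ in \eqref{eq:approxfact}; since latency classes of interest are closed under abscissa scaling, we may rescale so that the relevant load value becomes an integer multiple of the common weight $w$, i.e.\ we work with the rescaled latency $h(x)=\ell(\alpha x)$ for suitable $\alpha>0$ and reduce to computing loads that are integer multiples of $w$. (ii) Take a hard instance of unweighted $\mincon$ under latency $h$ (per \cite{paccagnan2021optimal}), where all players are identical; multiply every weight by $w$ to obtain a weighted instance with common weight $w$ and latency $h$. Because in both games the social cost on a resource with $k$ players present is $(kw)h(kw)$ versus $k\cdot h(k)$ after the abscissa rescale, the two objectives agree up to a global multiplicative constant, so the optimal values are in exact correspondence and any $c$-approximation to one yields a $c$-approximation to the other. (iii) Conclude that the weighted problem inherits the inapproximability factor, which by construction equals $\rho_\ell$ (and equals $B_{D+1}$ for degree-$D$ polynomials, matching the upper bound from \cref{thm:recursionpolynomials}). (iv) For the case $\rho_\ell=\infty$, the same reduction shows that the unbounded inapproximability of the unweighted problem transfers, giving $\NP$-hardness of approximation within any finite factor.

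The main obstacle I anticipate is step (i)--(ii): making the abscissa-scaling argument rigorous so that the \emph{discrete} optimizer in the unweighted hardness result of \cite{paccagnan2021optimal} lines up with the \emph{continuous} supremum $\sup_{t>0}$ in \eqref{eq:approxfact}. Concretely, the unweighted hardness gives a factor of the form $\sup_{n}\frac{\mb{E}[c(n P/n)]}{c(1)}$-type quantity evaluated at integer loads, and one must argue --- using closure under abscissa scaling to dilate any real $t>0$ into (an arbitrarily good approximation of) a rational, then clear denominators by choosing the number of players large --- that the supremum over this discrete family recovers $\rho_\ell$ exactly, with no loss. A secondary technical point is verifying that the reduction is genuinely approximation-preserving in the sense of \cite{AusielloPashos2014}: one must check that the map on instances is polynomial-time computable (which requires the rescaling factor $\alpha$ and the blow-up in player count to be polynomially bounded, consistent with \cref{assumptionbounded}) and that approximate solutions pull back correctly. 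Once these bookkeeping issues are handled, the reduction itself is short, and the final remark that hardness persists even with identical weights is immediate since the constructed instance has all weights equal to $w$ by design.
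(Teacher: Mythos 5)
Your high-level plan---reduce from the unweighted hardness result of \cite{paccagnan2021optimal}, exploiting that the hard instances there use identical players so that the corresponding weighted instance has a common weight $w$---is exactly the paper's strategy, and the observation that the two objectives agree up to a global constant under abscissa rescaling is the correct bridge. However, there are two genuine issues with the way you propose to carry this out.

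First, invoking \cref{lem:factor-weighted-unweighted} here would be circular: that lemma's proof already takes $\sup_{\ell\in\mc{L}}\rho_\ell$ to be the weighted inapproximability factor, which is precisely what \cref{thm:hardness} is supposed to establish. Moreover, \cref{lem:factor-weighted-unweighted} presupposes that the class $\mc{L}$ is closed under abscissa scaling, whereas \cref{thm:hardness} is stated for a single fixed latency $\ell$ with no such closure assumption. The paper's proof needs no class-level assumption at all: it defines just one auxiliary unweighted latency $\ell'$ (an abscissa rescaling of $\ell$ by the factor $w$) and applies the unweighted hardness result of \cite{paccagnan2021optimal} to that specific function.

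Second, the ``discrete supremum over $x\in\mb{N}$ versus continuous supremum over $t>0$'' mismatch you flag as your main obstacle is real, but your proposed resolution---approximating $t$ by rationals and clearing denominators by blowing up the number of players---is neither what the paper does nor fully worked out, and it is unnecessary. The paper resolves it in one line: choose $w$ to be an argument attaining the supremum in \eqref{eq:approxfact} and define $\ell'$ so that the $x=1$ term of the unweighted factor for $\ell'$ is exactly $\frac{\mb{E}_{P\sim\text{Poi}(1)}[wP\ell(wP)]}{w\ell(w)}=\rho_\ell$. Since the unweighted factor is a supremum over $x\in\mb{N}$, it is at least its $x=1$ term, and the desired lower bound follows with no approximation argument. (The paper additionally proves the reverse inequality, $\sup_{x\in\mb{N}}\frac{\mb{E}_{P\sim\text{Poi}(x)}[wP\ell(wP)]}{wx\,\ell(wx)}\le\rho_\ell$, via the convex-order relation $\text{Poi}(x)\le_{cx}x\cdot\text{Poi}(1)$ from Theorem~3.A.36 of \cite{shaked2007stochastic}; this stochastic-ordering ingredient is entirely absent from your sketch, although strictly speaking only the $x=1$ direction is needed for the theorem as stated.) So the reduction skeleton is right, but you are missing the decisive step of building $w$ into the auxiliary latency so that the discrete and continuous suprema coincide at $x=1$, and the lemma you lean on is downstream of the very result you are trying to prove.
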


Naturally, \cref{thm:hardness} applies directly to richer classes of congestion games, whereby the players' weights or the resources' latency functions can differ. For example, if the latency function of each resource can be constructed by non-negative linear combination of given functions $\{\ell_1,\dots,\ell_m\}$, then \mincon~is \NP-hard to approximate within any factor smaller than $\max_{j\in \{1,\dots,m\}}\rho_{\ell_j}$, i.e., smaller than that produced by the worst function. 

\begin{proof}
We will prove the theorem in two steps:
\begin{enumerate}
\item[(a)] 
First we will show that for games where all players have the same arbitrary but fixed weight $w$ and all all resources have the same latency function $\ell$, \mincon\ is \NP-hard to approximate within any factor smaller than 
\begin{align*}
\sup_{x\in\mb{N}}\frac{\mb{E}_{P\sim\text{Poi}(x)}[wP\ell(wP)]}{wx\,\ell(wx)}.
\end{align*}
\item[(b)]
Afterwards, we show that for the worst case $w$,  i.e.:
\begin{align}
w & \in \arg \sup_{t>0}\frac{\mb{E}_{P\sim\text{Poi}(1)} [tP\ell(tP)]}{t\ell(t)},
\label{eq:wmax}
\end{align}
this factor reduces to the one given in the theorem.
\end{enumerate}

We start by showing (a): 
Recall that we have fixed $w$. Since all players have the same weight, the latency function $\ell$ will only have to be evaluated at integer multiples of $w$. For $x\in \mathbb{N}$, define a new latency function $\ell'$ as $\ell'(x) = \ell(x\cdot w)$. To complete the proof of (a), we reduce from \emph{unweighted} congestion games where all resources have the same latency function $\ell'$ to weighted congestion games where all players have weight $w$ and all resources have the same latency function $\ell$. For \emph{unweighted} congestion games where all resources have the same latency function $\ell'$ it has been shown \cite{paccagnan2021optimal} that it is NP-hard to approximate \mincon\ by any factor smaller than 
\begin{align*}
\rho_{\ell'}=\sup_{x\in\mb{N}}\frac{\mb{E}_{P\sim\text{Poi}(x)}[P\ell'(P)]}{x\,\ell'(x)}.
\end{align*}
Given an \emph{unweighted} congestion game $G_u$ where all resources have the same latency function $\ell'$, construct the corresponding \emph{weighted} congestion game $G_w$ by changing all player weights from $1$ to $w$ and changing all latency functions from $\ell'$ to $\ell$. The set of players and resources and the strategy space of the players remains unchanged. This is clearly a polynomial construction. 

For any allocation $a\in \mc{A}$ denote $\SC(G_w,a)$ and $\SC(G_u,a)$ the corresponding social cost of the allocation $a$ in $G_w$ and $G_u$, respectively. Let $x_r=|a|_r$ be the number of players assigned by allocation $a$ to resource $r$. Observe that by construction we have 
\begin{align}
   \SC(G_w,a) 
   = \sum_{r\in R} w x_r \,\ell(w x_r) 
   = \sum_{r\in R} w x_r \,\ell'(x_r) 
   = {w} \sum_{r\in R} x_r \,\ell'(x_r)
   = {w} \cdot \SC(G_u,a),
   \label{eq:sc-unweigted-vs-weighted}
 \end{align}
that is, for all allocations $a\in\mc{A}$ the corresponding social costs in $G_w$ and $G_u$ differ by the fixed scalar factor $w$.
This implies that an optimal allocation $a^*$ for $G_w$ is also optimal for $G_u$ and for any allocation $a\in \mc{A}$ we have
\begin{align}
\frac{\SC(G_w,a) }{\SC(G_w,a^*)}
=\frac{\SC(G_u,a) }{\SC(G_u,a^*)}.
\label{eq:SCratios}
\end{align}
It follows that for weighted congestion games where all players have the same weight $w$ and all resources have the same latency function $\ell$, \mincon\ is also NP-hard to approximate by any factor smaller than 
\begin{align*}
\rho_{\ell'}
=\sup_{x\in\mb{N}}\frac{\mb{E}_{P\sim\text{Poi}(x)}[P\ell'(P)]}{x\,\ell'(x)}
=\sup_{x\in\mb{N}}\frac{\mb{E}_{P\sim\text{Poi}(x)}[wP\ell(wP)]}{wx\,\ell(wx)}.
\end{align*}
\smallskip
This finishes the proof of statement (a).

To complete the proof, we will show (b). That is for $w$ defined as in \eqref{eq:wmax}, we have 
\[
\sup_{x\in\mb{N}}\frac{\mb{E}_{P\sim\text{Poi}(x)}[wP\ell(wP)]}{wx\,\ell(wx)}
= \sup_{t>0}\frac{\mb{E}_{P\sim\text{Poi}(1)} [tP\ell(tP)]}{t\ell(t)}.
\]
We do so by proving that the left hand side is simultaneously larger-equal and smaller-equal to the right hand side. Towards proving the first of these two statements observe that 
\[
\sup_{x\in\mb{N}}\frac{\mb{E}_{P\sim\text{Poi}(x)}[wP\ell(wP)]}{wx\,\ell(wx)}\ge
\frac{\mb{E}_{P\sim\text{Poi}(1)}[wP\ell(wP)]}{w\,\ell(w)}
=
\sup_{t>0}\frac{\mb{E}_{P\sim\text{Poi}(1)} [tP\ell(tP)]}{t\ell(t)},
\]
where we used the definition of $w$ from \eqref{eq:wmax}.

We now turn attention to the converse inequality. To do so, we will show the  slightly stronger inequality where we restrict the supremum on the right-hand-side of to integer multiples of $w$. That is, we show that
\begin{align}
\sup_{x\in\mb{N}}\frac{\mb{E}_{P\sim\text{Poi}(x)}[wP\ell(wP)]}{wx\,\ell(wx)}
\le \sup_{x\in\mb{N}}\frac{\mb{E}_{P\sim\text{Poi}(1)} [wxP\ell(wxP)]}{wx\ell(wx)}.
\label{eq:stochasticorder-hardness-strong}
\end{align}
We proceed by applying Theorem 3.A.36 in \cite{shaked2007stochastic} (which we report as \cref{claim:storder} in \cref{app:claimstochorder} for completeness) with $k=x$, 
$X_i \sim x\,\text{Poi}(1)$, $Y \sim x\,\text{Poi}(1)$, and $\alpha_i = \frac{1}{x}$.
Clearly, $X_i =_{st} Y$, and hence $X_i \leq_{cx} Y$, for $i=1,\dots,x$.  
Moreover, $\alpha_i \geq 0$ and $\sum_{i=1}^x \alpha_i = 1$.

It then follows that $\sum_{i=1}^x \alpha_i X_i \leq_{cx} Y$. 
As the sum of $x$ independent Poisson random variables with distribution $\text{Poi}(1)$ is a Poisson random variable with distribution $\text{Poi}(x)$, $\sum_{i=1}^x \frac{1}{x} X_i =  \sum_{i=1}^x \text{Poi}(1) =  \text{Poi}(x)$ and $\sum_{i=1}^x \alpha_i X_i \leq_{cx} Y$ implies $ \text{Poi}(x) \leq_{cx} x \text{Poi}(1)$. Since $w>0$ we get for any $x\in\mb{N}$,
$$w \text{Poi}(x) \leq_{cx} wx \text{Poi}(1).$$

From the definition of convex order and since $\ell$ is semi-convex, we get for any $x\in\mb{N}$,
\begin{align*}
\mathbb{E}_{P \sim \text{Poi}(x)} \left[w P \ell\left(w P\right)\right] 
\leq \mathbb{E}_{P \sim \text{Poi}(1)}[wxP\ell(wxP)],
\end{align*}
which immediately implies \eqref{eq:stochasticorder-hardness-strong} because of the common denominator.
This completes the proof of (b) and the theorem.
\end{proof}

\newpage

\newpage
\appendix
\section{Appendix}

\subsection{A Useful Lemma on Stochastic Orders}
\label{app:claimstochorder}
\begin{claim}(Theorem 3.A.36 in \cite{shaked2007stochastic})
Let $X_1,\dots,X_k$ and $Y$ be $k+1$ random variables and $X_i \leq_{cx} Y$, then $\sum_{i=1}^k \alpha_i X_i \leq_{cx} Y$ for $\alpha_i \geq 0$ and $\sum_{i=1}^k \alpha_i = 1$.
\label{claim:storder}
\end{claim}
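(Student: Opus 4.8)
The plan is to derive the statement directly from the definition of the convex order, in such a way that no independence (or any joint-distribution) assumption on $X_1,\dots,X_k$ is ever invoked. Recall that $X \le_{cx} Y$ means $\mathbb{E}[f(X)] \le \mathbb{E}[f(Y)]$ for every convex $f\colon\mathbb{R}\to\mathbb{R}$ for which the expectations are defined, so the whole proof amounts to fixing an arbitrary such $f$, bounding $\mathbb{E}[f(\sum_{i=1}^k \alpha_i X_i)]$ by $\mathbb{E}[f(Y)]$, and then quantifying over $f$.

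First I would record the elementary fact that, since $\alpha_i\ge 0$ and $\sum_{i=1}^k\alpha_i = 1$, convexity of $f$ gives the pointwise (discrete Jensen) inequality $f(\sum_{i=1}^k \alpha_i X_i(\omega)) \le \sum_{i=1}^k \alpha_i f(X_i(\omega))$ at every sample point $\omega$. Taking expectations and using linearity yields $\mathbb{E}[f(\sum_{i=1}^k \alpha_i X_i)] \le \sum_{i=1}^k \alpha_i\,\mathbb{E}[f(X_i)]$. Then the hypothesis $X_i \le_{cx} Y$, applied term by term together with $\sum_{i=1}^k \alpha_i = 1$, gives $\sum_{i=1}^k \alpha_i\,\mathbb{E}[f(X_i)] \le \sum_{i=1}^k \alpha_i\,\mathbb{E}[f(Y)] = \mathbb{E}[f(Y)]$. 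Chaining these two steps and letting $f$ range over all convex functions is precisely the assertion $\sum_{i=1}^k \alpha_i X_i \le_{cx} Y$, which closes the argument.

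The only place I expect any friction is the integrability bookkeeping: before applying monotonicity of expectation to the pointwise bound, one must ensure that $\mathbb{E}[f(\sum_{i=1}^k \alpha_i X_i)]$ and each $\mathbb{E}[f(X_i)]$ are well defined in $(-\infty,+\infty]$. This is handled by the standing convention that the convex order is considered between integrable random variables --- note that $X_i \le_{cx} Y$ already forces $\mathbb{E}[X_i]=\mathbb{E}[Y]$ finite, by testing with $f(x)=x$ and $f(x)=-x$ --- together with a supporting-line bound $f(x)\ge ax+b$, which controls the negative part of $f(\cdot)$ evaluated at these variables. A fully equivalent alternative is induction on $k$: reduce to the two-term case $\alpha X_1 + (1-\alpha) X_2 \le_{cx} Y$ and iterate; this needs the same pointwise convexity estimate and offers no simplification, so I would present the one-line direct argument above.
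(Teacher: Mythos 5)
Your proof is correct. The paper does not prove this claim itself --- it is stated only as a black-box citation to Theorem 3.A.36 of Shaked and Shanthikumar \cite{shaked2007stochastic} --- so there is no in-paper argument to compare against, and your derivation supplies the short self-contained proof one would expect. For a convex $f$ and $\alpha_i\ge 0$ with $\sum_{i=1}^k\alpha_i=1$, the pointwise Jensen inequality $f\bigl(\sum_{i=1}^k\alpha_i X_i\bigr)\le\sum_{i=1}^k\alpha_i f(X_i)$ followed by linearity and monotonicity of expectation gives $\mathbb{E}\bigl[f\bigl(\sum_i\alpha_i X_i\bigr)\bigr]\le\sum_i\alpha_i\,\mathbb{E}[f(X_i)]$, and the hypothesis $X_i\le_{cx}Y$ applied term by term, together with $\sum_i\alpha_i=1$, bounds the right-hand side by $\mathbb{E}[f(Y)]$; quantifying over convex $f$ yields $\sum_i\alpha_i X_i\le_{cx}Y$. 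Your observation that the argument uses only the marginal laws of the $X_i$ and never their joint distribution is correct and worth recording (in the paper's applications in Lemma~\ref{lem:main-technical} and Theorem~\ref{thm:hardness} the $X_i$ happen to be independent, but that is used only to identify the law of the resulting sum, not to justify this convex-order step). Your integrability remark --- the convention that convex order is considered between integrable variables, plus a supporting-line lower bound on $f$ to keep $\mathbb{E}[f(\cdot)]$ well-defined in $(-\infty,+\infty]$ --- correctly dispatches the only technical point. No gaps.
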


\subsection{Proof of \cref{lem:main-technical}}
\label{sec:app-main-technical}
\begin{proof}
Let $z_{S,r,i}=z_{S,r}$ for all $i,S,r$ and introduce a set of independent Poisson random variables $P(z_{S,r,i})$. 
For every set $S$, denote $W_S=\sum_{i\in S} w_i$.
We start proving the left  inequality.  

To do so let us first fix a set $S$ and apply Claim~\ref{claim:storder}, 
with $k=|S|$,  
$Y = W_S P(z_{S,r})$, and for all $i\in S$,
$X_i= W_S P( z_{S,r,i})$ and $\alpha_i=\frac{w_i}{W_S}$. Clearly, $\alpha_i\ge 0$ and $\sum_{i\in S} \alpha_i =1$. 
Moreover, $X_i =_{st} Y$ and thus  $X_i \le_{cx} Y$ for all $i\in S$.
By Claim~\ref{claim:storder}, we get 
\begin{align*}
\sum_{i\in S} w_i \cdot P( z_{S,r,i})  
\le_{cx} 
\left(\sum_{i\in S} w_i\right) \cdot P( z_{S,r}),
\end{align*}
which implies 
\begin{align}
 \sum_{S\subseteq [N]}  \sum_{i\in S} w_i \cdot P( z_{S,r,i}) 
\le_{cx} 
 \sum_{S\subseteq [N]}  \left(\sum_{i\in S} w_i\right) \cdot P( z_{S,r}).
\label{eq:conPz}
\end{align}
We are now ready to show the inequality. Indeed, 
\begin{align*}
 \lefteqn{\sum_{i\in[N]} w_i\cdot P(v_{r,i})}\\
 & =_{st}  \sum_{i\in[N]} w_i\cdot  \sum_{S\subseteq [N]:i\in S} P( z_{S,r,i}) && \quad \text{(independence of $P$'s and constraint in \eqref{eq:cvxprogram})}\\
 & =_{st}  \sum_{S\subseteq [N]} \left(\sum_{i\in S} w_i \cdot P( z_{S,r,i}) \right) &&  \quad \text{(changing order of summation).}\\
 & \le_{cx}  \sum_{S\subseteq [N]} \left(\sum_{i\in S} w_i\right) \cdot P( z_{S,r}) &&  \quad \text{(by \eqref{eq:conPz})}.
\end{align*}
This finishes the proof of the first inequality of the lemma.

The second inequality, $ \sum_{S\subseteq[N]} \left(\sum_{i\in S} w_i\right) P(z_{S,r}) 
    \le_{cx} P(1) \sum_{S\subseteq[N]}\left(\sum_{i\in S} w_i\right)  \chi_{S,r}$ has been shown in more general form by Makarychev, Sviridenko \cite[Inequality (35)]{Makarychev18}. 
This completes the proof of the lemma.
\end{proof}

\subsection{Additional material for Section~\ref{sec:taxes}}
\label{app:taxes}
\begin{lemma}\label{app:LPcomputation}
LP \eqref{eq:cvxprogram} can be solved efficiently to any precision in polynomial time.
\end{lemma}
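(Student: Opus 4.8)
The only obstacle to solving \eqref{eq:cvxprogram} directly is the family $\{z_{S,r}\}_{S\subseteq[N],\,r\in\mc{R}}$, which contains exponentially many variables; every other variable and every equality constraint is already polynomially many (we take the game to be given explicitly, so each $|\mc{A}_i|$ is polynomial). The plan is to pass to the dual of \eqref{eq:cvxprogram} -- which has only polynomially many variables but exponentially many constraints -- and to equip it with a polynomial-time separation oracle, so that \eqref{eq:cvxprogram} can be solved to arbitrary precision by the ellipsoid method via the equivalence of separation and optimization; this mirrors the approach of \cite{Makarychev18}.

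First I would write the dual. Assigning a multiplier $\alpha_r$ to each constraint $\sum_S z_{S,r}=1$, a multiplier $\beta_{r,i}$ to each constraint $\sum_{S:i\in S}z_{S,r}=v_{r,i}$, and further multipliers to the two $y$-constraints, the dual is a maximization in these polynomially many multipliers. Its constraints are of two types: (i) those dual to the $v$- and $y$-variables, which are polynomially many and can be checked by inspection; and (ii) one inequality per variable $z_{S,r}$, namely $\alpha_r+\sum_{i\in S}\beta_{r,i}\le c_r\big(\sum_{i\in S}w_i\big)$ for every $S\subseteq[N]$ and $r\in\mc{R}$. Separating over (ii) reduces, for each fixed $r$, to computing $\max_{S\subseteq[N]}\big[\sum_{i\in S}\beta_{r,i}-c_r(\sum_{i\in S}w_i)\big]$ and comparing it with $-\alpha_r$.

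Second I would solve this maximization using \cref{assumptionbounded}. Since $c_r(\sum_{i\in S}w_i)$ depends on $S$ only through the total weight $W_S=\sum_{i\in S}w_i$, and $W_S/\delta$ is an integer ranging over $\{0,1,\dots,\sum_i w_i/\delta\}$ with $\sum_i w_i/\delta$ polynomially bounded, it suffices to compute, for each admissible value $m\delta$ of $W_S$, the quantity $f_r(m)=\max\{\sum_{i\in S}\beta_{r,i}:\sum_{i\in S}w_i=m\delta\}$ and then maximize $f_r(m)-c_r(m\delta)$ over $m$. Each $f_r(m)$ is a $0/1$-knapsack value with item sizes $w_i/\delta$, which are polynomially bounded integers, hence computable by the textbook dynamic program in time polynomial in $N$ and $1/\varepsilon$. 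Ranging over the polynomially many values of $m$ and over $r\in\mc{R}$ yields a polynomial-time separation oracle for the dual.

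The conclusion then follows from the ellipsoid method: a polynomial-time separation oracle yields a polynomial-time algorithm that solves the dual to any desired accuracy and, along the way, singles out polynomially many of the constraints (ii); solving \eqref{eq:cvxprogram} restricted to the correspondingly few $z_{S,r}$ variables returns an optimal (or $\varepsilon$-optimal) primal solution $(\bar z,\bar v,\bar y)$ with polynomial support. I expect the separation step to be the main obstacle, and I note that it is precisely \cref{assumptionbounded} that converts the otherwise merely pseudo-polynomial knapsack subroutine into a genuinely polynomial one -- which is exactly why that assumption is made. The remaining point, namely that the ellipsoid method returns only approximately optimal dual solutions, is routine: it costs at most an arbitrarily small additive error in the objective, matching the ``to any precision'' in the statement.
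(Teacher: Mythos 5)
Your approach is correct, and it differs from the paper's in its decomposition, not in its key ingredients. The paper works in two levels: it introduces, for each resource $r$, a function $H_r(y,v)$ defined as the optimum of an inner LP in the $z_{S,r}$-variables alone with $(y,v)$ held fixed, proves that each $H_r$ is convex, shows that $H_r$ together with a subgradient can be evaluated in polynomial time by dualizing that \emph{inner} LP and separating its exponentially many constraints via a knapsack dynamic program, and then solves the outer convex program $\min_{(y,v)\in\mc{P}}\sum_r H_r(y,v)$ by the ellipsoid method, closely following \cite[Lemma 3.1]{Makarychev18}. You instead dualize \eqref{eq:cvxprogram} once, in its entirety, obtaining a single dual with polynomially many variables and exponentially many constraints indexed by $(S,r)$, and you separate those directly. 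The engine is identical -- the hard constraints $\alpha_r+\sum_{i\in S}\beta_{r,i}\le c_r(\sum_{i\in S}w_i)$ are the same, and in both arguments it is \cref{assumptionbounded} that turns the exact-weight knapsack DP (over the polynomially many values $m\delta$ of $W_S$, with possibly negative item profits $\beta_{r,i}$, which the DP handles without trouble) into a genuinely polynomial separation oracle -- but your formulation is a flat, one-shot application of the separation-equals-optimization equivalence, whereas the paper's is a nested one that makes the per-resource convex substructure explicit and hands the ellipsoid method explicit subgradients of $H_r$. Your route is arguably more economical; the paper's matches the presentation in \cite{Makarychev18} more closely. Both recover a primal solution of polynomial support from the polynomially many violated constraints the ellipsoid method touches, and both deliver the claimed arbitrary precision.
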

\begin{proof}
Denote $\mc{P}$ the polytope defined by inequalities 
\be 
\begin{aligned}
&v_{r,i} = \sum_{a_i\in\mc{A}_i:r\in a_{i}} y_{i,a_i} && \text{for all } r\in \mc{R}, i\in[N],\\
&\sum_{a_i\in\mc{A}_i} y_{i,a_i} = 1 && \text{for all } i \in [N], \\
&y_{i,a_i} \ge 0 && \text{for all } i \in [N], a_i\in\mc{A}_i. \\
\end{aligned}
\ee

Let denote $H_r(y,v)$ for a given $(y,v)\in\mc{P}$ the minimal cost of the term of resource $r$ in the objective of LP \eqref{eq:cvxprogram}, i.e. the optimal cost of the following program whereas $(y,v)\in\mc{P}$ and $r\in\mc{R}$ are fixed
\vspace{-5pt}
\be 
\label{eq:polytopeprogram}
\begin{aligned}
\min
\quad & 
\sum_{S\subseteq [N]}
c_r\left(\sum_{i\in S}w_i \right) \cdot z_{S,r}\\
\text{\normalfont subject to}\quad  
& \sum_{S\subseteq [N]} z_{S,r} = 1 &&\\
& \sum_{S\subseteq [N]:i\in S} z_{S,r} = v_{r,i} &&\text{for all } i\in[N],\\
& z_{S,r}\ge 0 && \text{for all } S\subseteq[N]
\end{aligned}
\ee 
LP \eqref{eq:cvxprogram} can be equivalently written as
\vspace{-5pt}
\be
\begin{aligned}
\min
\quad & 
\sum_{r\in\mc{R}}
H_r(y,v) \\
\text{\normalfont subject to}\quad  
& (y,v) \in \mc{P} &&
\label{eq:conH}
\end{aligned}
\ee

Similar to \cite[Lemma 3.1]{Makarychev18}, we show that functions $H_r(y,v)$ are convex and that there exists a polynomial time algorithm to compute $H_r$ and find a subgradient of $H_r$. Together, this implies that the minimum of the convex program \eqref{eq:conH} can be found using the ellipsoid method. We recall the assumption that $w_i/\delta$ is a polynomially bounded integer. For simplicity, we assume $\delta=1$ and hence $w_i$ is integral (and polynomially bounded).

To show that $H_r$ is convex, with similar approach to \cite[Lemma E.1]{Makarychev18}, let $z^\star$ and $z^{\star\star}$ be optimal solutions for $(y^\star,v^\star)$ and $(y^{\star\star},v^{\star\star})$. As LP \eqref{eq:polytopeprogram} only has linear constraints, $\lambda z^\star+(1-\lambda)z^{\star\star}$ is feasible for $\lambda (y^\star,v^\star)+(1-\lambda)(y^{\star\star},v^{\star\star})$, for any $\lambda \in [0,1]$. Hence, $H_r(\lambda y^\star+(1-\lambda)y^{\star\star},\lambda v^\star+(1-\lambda)v^{\star\star})\le \lambda H_r(y^\star,v^\star)+(1-\lambda)H_r(y^{\star\star},v^{\star\star})$, as the LHS is at most the cost of $\lambda z^\star+(1-\lambda)z^{\star\star}$, which equals the expression on the RHS.

To show that $H_r$ can be computed efficiently, we obtain the dual LP to LP \eqref{eq:polytopeprogram} and introduce variable $\xi$ for constraint $\sum_{S\subseteq [N]} z_{S,r} = 1$ and variables $\eta_i$ for constraints $\sum_{S\subseteq [N]:i\in S} z_{S,r} = v_{r,i}$ for $i\in[N]$. 
\be
\begin{aligned}
\max
\quad & 
\xi +\sum_{i\in[N]} \eta_i v_{r,i} \\[-5pt]
\text{\normalfont subject to}\quad  
& \xi+\sum_{S\subseteq [N]:i\in S}\eta_i \le c_r\left(\sum_{i\in S}w_i\right) && \text{for all } S\subseteq[N] 
\end{aligned}
\ee 
This dual has exponentially many constraints. However, with argument as in \cite{Makarychev18}, we can guess $B^\star=\sum_{i\in S^\star}w_i$ for set $S^\star$ violating the constraint, as $w_i$ are polynomially bounded and hence $B^\star$. A maximum knapsack problem of size $B^\star$ over $S\subseteq [N]$ to maximise the sum of $\eta_i$ in the set is used to obtain the optimal set $S^\star$. Let $(\xi^\star,\eta^\star)$ be optimal for the dual LP corresponding to $v_{r,i}^\star$. We observe that due to strong duality, $H_r(y^\star,v^\star)$ equals the objective value of the dual LP. 

Further, for any feasible $v_{r,i}$, let $z_{S,r}$ the optimal variables for $H_r(y,z)$ and the constraints of the LPs imply
\be
\begin{aligned}
H_r(y,v) &= \sum_{S\subseteq [N]}z_{S,r}c_r\left(\sum_{i\in S}w_i\right)=\xi^\star+\sum_{S\subseteq [N]}z_{S,r}\left(c_r\left(\sum_{i\in S}w_i\right)-\xi^\star\right) \\ &\ge \xi^\star+\sum_{S\subseteq [N]}z_{S,r}\left(\sum_{i\in S}\eta^\star_i\right)=\xi^\star + \sum_{i\in[N]}\eta^\star_i \left(\sum_{S\subseteq [N]:i\in S}z_{S,r}\right)\\
&=\xi^\star+\sum_{i\in [N]}\eta^\star_i v_{r,i}
\end{aligned}
\ee
$H_r(y,v)\ge\xi^\star+\sum_{i\in[N]}\eta^\star_i v_{r,i}$ for any $v_{r,i}$ and therefore $\eta^\star$ is a subgradient of $H_r$ at $v^\star_{r,i}$.
\end{proof}

\subsection{Additional material for Section~\ref{sec:poly}} 
\label{app:poly}
\begin{lemma} \label{lem:rhopolynomial}
Let $D\in\mb{N}_0$ and let $\ell:\mb{R}_{\ge0}\rightarrow\mb{R}$ be $\ell(x)=\sum_{j=0}^D b_j x^j$ with $b_j\geq0$ for all $j$. Let $m:\mb{R}_{\ge0}\rightarrow\mb{R}$ be $m(x)=x^D$. Define $\rho_\ell,\rho_m$ as in (\ref{eq:approxfact}). Then $\rho_\ell \leq \rho_m$, and $\rho_m=B_{D+1}$, i.e., the $(D+1)$'st Bell number.
\end{lemma}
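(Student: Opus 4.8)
## Proof Proposal for Lemma~\ref{lem:rhopolynomial}

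The plan is to split the lemma into its two assertions and handle them separately, since they rely on quite different tools. For the inequality $\rho_\ell \le \rho_m$, the key observation is that $\rho_\ell$ is defined through the ratio $\mb{E}_{P\sim\text{Poi}(1)}[tP\ell(tP)]/(t\ell(t))$, and that both numerator and denominator are \emph{additive} over the monomials composing $\ell$. Writing $\ell(x)=\sum_{j=0}^D b_j x^j$ with $b_j \ge 0$, we have $t\ell(t)=\sum_j b_j t^{j+1}$ and $\mb{E}[tP\ell(tP)]=\sum_j b_j t^{j+1}\mb{E}[P^{j+1}]$. Hence for every fixed $t>0$ the ratio is a weighted mediant (a ratio $\sum_j \alpha_j \beta_j / \sum_j \alpha_j$ with $\alpha_j = b_j t^{j+1} \ge 0$) of the per-monomial ratios $\mb{E}[P^{j+1}]$, where $P\sim\text{Poi}(1)$. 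Such a mediant is bounded above by $\max_j \mb{E}[P^{j+1}] = \mb{E}[P^{D+1}]$, because $j \mapsto \mb{E}_{P\sim\text{Poi}(1)}[P^{j+1}]$ is non-decreasing in $j$ (the raw moments of a Poisson$(1)$ variable are the Bell numbers $B_{j+1}$, which form a strictly increasing sequence). Note that for the pure monomial $m(x)=x^D$ the ratio is exactly $\mb{E}[P^{D+1}]$ independently of $t$, so $\rho_m = \mb{E}_{P\sim\text{Poi}(1)}[P^{D+1}]$, and taking the supremum over $t$ on the left gives $\rho_\ell \le \rho_m$.

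For the second assertion, $\rho_m = B_{D+1}$, the remaining task is to identify $\mb{E}_{P\sim\text{Poi}(1)}[P^{D+1}]$ with the $(D+1)$'st Bell number. This is the classical Dobiński-type identity: the $n$-th raw moment of a Poisson$(1)$ random variable equals $B_n = \sum_{k\ge 0} k^n / (e\, k!)$, which is exactly Dobiński's formula for the Bell numbers. I would prove it by recalling that the (falling) factorial moments of $\text{Poi}(\lambda)$ satisfy $\mb{E}[(P)_k] = \lambda^k$, so $\mb{E}[(P)_k]=1$ for $\lambda=1$; expanding ordinary powers in terms of falling factorials via the Stirling numbers of the second kind, $x^n = \sum_{k=0}^n \genfrac\{\}{0pt}{}{n}{k} (x)_k$, yields $\mb{E}[P^n] = \sum_{k=0}^n \genfrac\{\}{0pt}{}{n}{k} = B_n$, the last equality being the standard combinatorial definition of the Bell numbers as the total number of set partitions. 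Applying this with $n = D+1$ gives $\rho_m = B_{D+1}$.

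Combining the two parts yields $\rho_\ell \le \rho_m = B_{D+1}$, as claimed. I expect the main (though still modest) obstacle to be the monotonicity claim $\mb{E}_{P\sim\text{Poi}(1)}[P^{j+1}]$ non-decreasing in $j$: one cleanest route is to invoke the Bell-number recursion $B_{n+1} = \sum_{k=0}^{n}\binom{n}{k} B_k$, which makes $B_{n+1} \ge B_n$ immediate (the $k=n$ term alone equals $B_n$ and all terms are non-negative), rather than arguing probabilistically. Everything else — additivity of numerator and denominator over monomials, the mediant inequality, and the Dobiński/Stirling identification of the Poisson moments with Bell numbers — is routine once these pieces are in place.
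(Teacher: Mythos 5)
Your proof is correct and follows essentially the same route as the paper: both reduce the inequality $\rho_\ell \le \rho_m$ to the term-by-term comparison $\mb{E}_{P\sim\text{Poi}(1)}[P^{j+1}]\le\mb{E}_{P\sim\text{Poi}(1)}[P^{D+1}]$ (your ``weighted mediant'' observation is precisely the cross-multiplication the paper performs), identify the Poisson$(1)$ moments with Bell numbers, and appeal to monotonicity of the Bell sequence. The only difference is that you supply self-contained derivations (Dobi\'nski/Stirling for the moment identity, the binomial recursion for monotonicity) where the paper simply cites \cite{mansour2015commutation}.
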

\begin{proof}
To show that \[\rho_\ell = \sup_{w>0}\frac{\mb{E}_{P\sim\text{Poi}(1)} [\sum_{j=0}^D b_j w^{j+1}P^{j+1}]}{\sum_{j=0}^D b_j w^{j+1}} \leq \sup_{w>0}\frac{\mb{E}_{P\sim\text{Poi}(1)} [w^{D+1}P^{D+1}]}{w^{D+1}} = \rho_m,\] we show that for any $w>0$, 

\[\sum_{j=0}^D b_j \mb{E}_{P\sim\text{Poi}(1)}[w^{j+1}P^{j+1}]w^{D+1}\leq \sum_{j=0}^D b_jw^{j+1}\mb{E}_{P\sim\text{Poi}(1)}[w^{D+1}P^{D+1}].\] 
The lemma then follows by rearranging and taking the supremum on both sides. Note that 
\[
\begin{aligned}
\sum_{j=0}^D b_j \mb{E}_{P\sim\text{Poi}(1)}[w^{j+1}P^{j+1}]w^{D+1} &\leq \sum_{j=0}^D b_jw^{j+1}\mb{E}_{P\sim\text{Poi}(1)}[w^{D+1}P^{D+1}] \iff \\
\sum_{j=0}^D b_j w^{j+D+2}\mb{E}_{P\sim\text{Poi}(1)}[P^{j+1}] &\leq \sum_{j=0}^D b_jw^{j+D+2}\mb{E}_{P\sim\text{Poi}(1)}[P^{D+1}].
\end{aligned}
\] Thus, to conclude it is sufficient to show for all $j \in \{0,\dots,D\}$, $\mb{E}_{P\sim\text{Poi}(1)}[P^{j+1}]\leq \mb{E}_{P\sim\text{Poi}(1)}[P^{D+1}]$. As the $(j+1)$'st moment of the Poisson distribution with parameter $1$ equals the $(j+1)$'st Bell number \cite{mansour2015commutation}, this statement is equivalent to showing that $B_{j+1}\leq B_{D+1}$ for $j\in\{0,\dots,D\}$, which holds as the Bell numbers are non-decreasing \cite{mansour2015commutation}.  
\end{proof}

\begin{lemma}
\label{lem:Md}
For every $\d\in\mb{N}_0$, $v\in\mb{R}^N$, and non-negative weights $(w_i)_{i\in[N]}$, define $M_{\d}=\mb{E}\left[(\sum_i P_iw_i)^{\d} \right]$, where $P_i\sim\text{Poi}\left({{v_{i}}}\right)$ are independent random variables. Let  $\myb_{\j}=\sum_i v_i w_i^{\j{+1}}$, $\j\in \mb{N}_0$. Then the sequence $(M_{d})_{d\in\mb{N}_0}$ can be recursively computed from
\[
\begin{cases}
M_0&=1\\
M_{\n+1}&=\sum_{\k=0}^{\n}  \binom{\n}{\k}\myb_{\n-\k}M_\k\qquad n\in\mb{N}	
\end{cases}
\]	
\end{lemma}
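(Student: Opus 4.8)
The plan is to compute the moments $M_d = \mb{E}[(\sum_i P_i w_i)^d]$ via a standard generating-function / cumulant argument, exploiting the fact that $S := \sum_i P_i w_i$ is a weighted sum of independent Poisson variables. First I would write down the moment generating function of $S$. Since $P_i \sim \text{Poi}(v_i)$ has MGF $\mb{E}[e^{t P_i}] = \exp(v_i(e^t-1))$, independence gives
\[
\Phi(t) := \mb{E}\!\left[e^{tS}\right] = \prod_{i} \exp\!\bigl(v_i(e^{w_i t}-1)\bigr) = \exp\!\Bigl(\sum_i v_i\bigl(e^{w_i t}-1\bigr)\Bigr).
\]
Writing $g(t) = \sum_i v_i(e^{w_i t}-1)$, note that $g(0)=0$ and, for $j\ge 1$, the $j$-th derivative is $g^{(j)}(t) = \sum_i v_i w_i^j e^{w_i t}$, so $g^{(j)}(0) = \sum_i v_i w_i^j = \myb_{j-1}$ by the definition $\myb_j = \sum_i v_i w_i^{j+1}$. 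Thus $g$ has Taylor coefficients governed precisely by the $\myb$'s.

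Next I would extract the recursion from $\Phi = e^{g}$. Differentiating, $\Phi'(t) = g'(t)\Phi(t)$. Since $M_d = \Phi^{(d)}(0)$ (the $d$-th moment), I apply the Leibniz rule to $\Phi' = g' \cdot \Phi$ at $t=0$:
\[
M_{n+1} = \Phi^{(n+1)}(0) = \sum_{k=0}^{n} \binom{n}{k} \bigl(g'\bigr)^{(k)}(0)\, \Phi^{(n-k)}(0) = \sum_{k=0}^{n} \binom{n}{k}\, g^{(k+1)}(0)\, M_{n-k}.
\]
Substituting $g^{(k+1)}(0) = \myb_k$ gives $M_{n+1} = \sum_{k=0}^n \binom{n}{k} \myb_k M_{n-k}$, and re-indexing the sum $k \mapsto n-k$ yields exactly $M_{n+1} = \sum_{k=0}^n \binom{n}{k} \myb_{n-k} M_k$, as claimed. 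The base case $M_0 = \mb{E}[1] = 1$ is immediate.

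The only mild technical point — not really an obstacle, but worth a sentence — is justifying that $\Phi(t)$ is finite and smooth in a neighborhood of $0$, so that the derivatives $M_d = \Phi^{(d)}(0)$ are well defined and the Leibniz manipulation is legitimate; this holds because $g(t)$ is a finite sum of entire functions, hence $\Phi = e^g$ is entire, and its derivatives at $0$ agree with the moments of $S$ (all of which are finite, $S$ being a finite weighted sum of Poisson variables). One could alternatively bypass MGFs entirely and argue purely formally with exponential generating functions, treating $\{M_d\}$ and $\{\myb_j\}$ as coefficient sequences of formal power series related by $\sum_d M_d t^d/d! = \exp(\sum_{j\ge 0} \myb_j t^{j+1}/(j+1)!)$; the recursion then drops out of the identity $(\sum M_d t^d/d!)' = (\sum \myb_j t^j/j!)(\sum M_d t^d/d!)$ by matching coefficients. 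I would present the MGF version as the main line since it is the most transparent, and remark that this recursion is the weighted Poisson analogue of the classical binomial recursion for the Bell numbers (recovered when all $w_i=1$ and $\sum_i v_i = 1$, so that every $\myb_j = 1$).
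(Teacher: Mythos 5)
Your proof is correct and follows essentially the same route as the paper: compute the MGF of $S=\sum_i w_i P_i$, write it in the form $\Phi=e^{g}$, and extract the moment recursion by differentiating $\Phi'=g'\Phi$ and evaluating at $0$. The only difference is presentational — the paper establishes the formula for $\M_P^{(n+1)}(t)$ by an explicit induction on $n$ (in effect re-deriving the Leibniz product rule together with the identity $\hat\myb'_{n-1-k}=\hat\myb_{n-k}$), whereas you invoke Leibniz's rule directly, which is a modest streamlining and changes nothing substantive.
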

\begin{proof}
To prove the result we observe that $M_{\d}$ represents the $\d$-th moment of the random variable $P=\sum_i P_i w_i$ where $P_i\sim\text{Poi}\left(v_i\right)$ are independent. Therefore, we base our proof on the moment generating function approach, which allows us to easily compute the moments of $P$ by simply evaluating the derivatives of its moment generating function $\M_P(t)$ in the origin. 
Since $P_i\sim\text{Poi}(v_i)$, it is $\M_{P_i}(t)=e^{v_i(e^t-1)}$, see \cite{taboga2012probability}. Hence, thanks to the properties of the moment generating function \cite[Proposition 214]{taboga2012probability},  we have 
\[
\M_P(t)=\prod_i \M_{P_i}(w_it) = \prod_i e^{v_i(e^{w_it}-1)}%
=e^{\sum_i f_i(t)}\cdot e^{-\sum_i v_i},
\]
where we defined $f_i(t)={v_i e^{w_i t}}$. At this stage, we are left to compute the derivatives of $\M_P(t)$ since the $(n+1)$-st moment of $P$ is equal to the $(n+1)$-st derivative of $\M_P(t)$ evaluated in zero, i.e., $M_{n+1} = \M^{(n+1)}_P(0)$, see \cite[Prop. 210]{{taboga2012probability}}. We now claim that for any $n\in\mb{N}_0$ it is
\be
\M^{(n+1)}_P(t)=\sum_{\k=0}^n \binom{n}{k}\hat{\myb}_{n-\k}(t)\M^{(\k)}_P(t),
\label{eq:finalM}
\ee
where $\M_p^0(t)=\M_p(t)$ and $\hat \myb_j(t)=\sum_i f_i(t) w_i^{j+1}$, $j\in\mb{N}_0$. Once \eqref{eq:finalM} is shown, the claim follows readily by evaluating $\M^{(n+1)}_P(0)$, which will yield the desired recursion over $n$.  We thus turn the attention to \eqref{eq:finalM}, which we prove by induction. Towards this goal, we drop the functional dependence on $t$ everywhere to simplify the notation. It is immediate to verify that for $n=1$ the claim holds, since differentiating $\M_P$ gives
\[
\M'_P=%
\M_P\cdot  \sum_i w_i f_i = \binom{0}{0}\hat \myb_{0}\M^{(0)}_P.
\]
We now show that if the claim holds for a given $n\in\mb{N}_0$, it holds for $n+1$. Towards this goal, we start from the expression for $\M_P^{(n)}$, taken from \eqref{eq:finalM} with $n$ in place of $n+1$, which we differentiate to obtain $\M_P^{(n+1)}$. We then show that such expression equals \eqref{eq:finalM} as desired:
\begin{equation*}
\begin{split}
\M_P^{n+1}
&=\sum_{\k=0}^{n-1} \binom{n-1}{k}\left[\hat{\myb}'_{n-1-\k}\M^{(\k)}_P+ \hat{\myb}_{n-1-\k}\M^{(\k+1)}_P\right]\\
&=\sum_{\k=0}^{n-1} \binom{n-1}{k}\left[\hat{\myb}_{n-\k}\M^{(\k)}_P+ \hat{\myb}_{n-1-\k}\M^{(\k+1)}_P\right]\\
&=\binom{n-1}{0}\hat \myb_n \M_P^{(0)} + \sum_{\k=1}^{n-1}\left[\binom{n-1}{\k}+\binom{n-1}{\k-1}\right]\hat \myb_{n-\k}\M_P^{(\k)}+\binom{n-1}{n-1}\hat \myb_0 \M_P^{(n)}\\
&=\binom{n}{0}\hat \myb_n \M_P^{(0)} + \sum_{\k=1}^{n-1}\binom{n}{\k}\hat \myb_{n-\k}\M_P^{(\k)}+\binom{n}{n}\hat \myb_0 \M_P^{(n)}\\
&=\sum_{\k=0}^n \binom{n}{k}\hat{\myb}_{n-\k}\M^{(\k)}_P,
\end{split}
\end{equation*}
where the second line holds since $\hat{\myb}'_{n-1-\k}=\hat{\myb}_{n-\k}$, and the fourth line holds due to the fact that $\binom{n-1}{\k}+\binom{n-1}{\k-1}=\binom{n}{\k}$. This completes the proof.
\end{proof}

\begin{lemma}
\label{lem:factor-weighted-unweighted}
Let $\mc{L}$ be a class of semi-convex latency functions that is closed under abscissa scaling. Then approximating the minimum social cost is \NP-hard within the same factor in weighted and unweighted congestion games with latency functions in $\mc{L}$. 
\end{lemma}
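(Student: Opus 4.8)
\end{lemma}

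The plan is to establish the sharper claim that the tight inapproximability factor of \mincon{} is $\rho^\star:=\sup_{\ell\in\mc{L}}\rho_\ell$ for \emph{both} the weighted and the unweighted classes of congestion games whose latencies lie in $\mc{L}$, where $\rho_\ell$ is as in \eqref{eq:approxfact}. Tightness in the weighted case is \cref{thm:hardness} together with the polynomial-time $(\rho^\star+\varepsilon)$-approximation of \cref{thm:opttaxes}/\cref{cor:polyalgo}; tightness in the unweighted case is \cite{paccagnan2021optimal}, which moreover pins down, for each latency $\ell\in\mc{L}$, the exact per-function threshold $\rho^u_\ell:=\sup_{x\in\mb{N}}\mb{E}_{P\sim\text{Poi}(x)}[P\,\ell(P)]/(x\,\ell(x))$, so that the unweighted threshold over the class is $\rho^u:=\sup_{\ell\in\mc{L}}\rho^u_\ell$. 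Hence the entire content of the lemma boils down to the analytic identity $\rho^u=\rho^\star$, and I would prove it by two inequalities.

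First I would dispose of the easy inequality $\rho^u\le\rho^\star$. Since an unweighted congestion game with latencies in $\mc{L}$ is literally a weighted one with latencies in $\mc{L}$ and unit weights --- with the same feasible allocations and the same social cost \eqref{eq:systemcost} --- the weighted class is at least as hard as the unweighted one, giving $\rho^\star\ge\rho^u$ at once; alternatively, the convex-order bound $\text{Poi}(x)\le_{cx} x\,\text{Poi}(1)$ established inside the proof of \cref{thm:hardness} (via \cref{claim:storder}) gives $\mb{E}_{P\sim\text{Poi}(x)}[P\,\ell(P)]\le\mb{E}_{P\sim\text{Poi}(1)}[x\,P\,\ell(xP)]$ for every $x\in\mb{N}$ because $y\mapsto y\,\ell(y)$ is convex, and dividing by $x\,\ell(x)$ and taking suprema yields $\rho^u_\ell\le\rho_\ell$ for each $\ell$.

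The reverse inequality $\rho^\star\le\rho^u$ is the crux, and it is precisely here that closure under abscissa scaling is used. Fix $\ell\in\mc{L}$ and $t>0$, and consider the rescaled latency $\ell_t(x):=\ell(tx)$, which again lies in $\mc{L}$ by the closure hypothesis. Evaluating the supremum defining $\rho^u_{\ell_t}$ at the admissible value $x=1$, and then multiplying numerator and denominator by $t$, gives $\rho^u_{\ell_t}\ge \mb{E}_{P\sim\text{Poi}(1)}[P\,\ell_t(P)]/\ell_t(1)=\mb{E}_{P\sim\text{Poi}(1)}[t\,P\,\ell(tP)]/(t\,\ell(t))$. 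Because every $\ell_t$ lies in $\mc{L}$, taking the supremum first over $t>0$ and then over $\ell\in\mc{L}$ gives $\rho^u\ge\sup_{\ell\in\mc{L}}\sup_{t>0}\mb{E}_{P\sim\text{Poi}(1)}[t\,P\,\ell(tP)]/(t\,\ell(t))=\sup_{\ell\in\mc{L}}\rho_\ell=\rho^\star$, as wanted. Conceptually this says that the continuum of abscissa-rescalings $\{\ell(t\,\cdot):t>0\}\subseteq\mc{L}$ lets the discrete supremum over $x\in\mb{N}$ in $\rho^u_\ell$ mimic the continuous supremum over $t>0$ in $\rho_\ell$; it is exactly the reduction underlying \cref{thm:hardness}, read ``from the unweighted side'', whose seed instances (unweighted congestion games with common latency $\ell(w\,\cdot)$, with $w$ attaining \eqref{eq:approxfact}) remain inside $\mc{L}$ by the closure hypothesis, with \eqref{eq:sc-unweigted-vs-weighted}--\eqref{eq:SCratios} making that reduction approximation preserving in both directions. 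Combining the two inequalities yields $\rho^u=\rho^\star$, which --- via the matching algorithms recalled above --- is the common tight inapproximability factor; specialized to polynomial latencies of maximum degree $D$ it equals $B_{D+1}$ by \cref{lem:rhopolynomial}. I expect the only genuine obstacle to be this identity $\rho^u=\rho^\star$, and specifically its nontrivial direction: the embedding/convex-order inequality is routine, whereas the abscissa-scaling direction requires noticing both that rescaling the abscissa is a within-$\mc{L}$ move and that it already suffices to use the single term $x=1$ of the unweighted supremum. A minor care point is that $\mb{N}$ here denotes the positive integers (so that $x=1$ is admissible) and that $\mc{L}$ is closed under abscissa scaling by every $t>0$.
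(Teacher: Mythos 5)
Your proposal is correct and takes essentially the same approach as the paper: you reduce the lemma to the identity $\rho^u=\rho^\star$, prove the easy inequality $\rho^u\le\rho^\star$ by observing that unweighted games embed into weighted ones, and prove the key inequality $\rho^\star\le\rho^u$ by using closure under abscissa scaling together with the single term $x=1$ of the unweighted supremum — exactly the paper's chain of inequalities. The only additions you make (the alternative convex-order argument for the easy direction, and the contextual commentary) are harmless and correct.
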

\begin{proof}
Let $\ell \in \mc{L}$. Then, due to abscissa scaling, the function $\ell'(x) = \ell(\alpha x)$ for any $\alpha >0$ is also in $\mc{L}$. Let 
\[ \rho_u = \sup_{\ell \in \mc{L}} \sup_{x\in\mb{N}} \frac{\mb{E}_{P\sim\text{Poi}(x)} [P\ell(P)]}{x\ell(x)} \]
be the inapproximability factor in unweighted congestion games for class $\mc{L}$ \cite{paccagnan2021optimal}
and let \[ \rho = \sup_{\ell \in \mc{L}} \sup_{t >0} \frac{\mb{E}_{P\sim\text{Poi}(1)} [tP\ell(t P)]}{t\ell(t)} \]
be the inapproximability factor in weighted congestion games. Clearly, $\rho_u \leq \rho$ (unweighted congestion games are a subset of weighted congestion games) and it remains to show that $\rho_u \geq \rho$. This follows readily as
\begin{align*}
\rho_u = \sup_{\ell \in \mc{L}} \sup_{x\in\mb{N}} \frac{\mb{E}_{P\sim\text{Poi}(x)} [P\ell(P)]}{x\ell(x)} & \geq \sup_{\ell \in \mc{L}} \frac{\mb{E}_{P\sim\text{Poi}(1)} [P\ell(P)]}{\ell(1)} \\
& \geq \sup_{\ell \in \mc{L}, \alpha >0} \frac{\mb{E}_{P\sim\text{Poi}(1)} [P\ell(\alpha P)]}{\ell(\alpha)} \\
& = \sup_{\ell \in \mc{L}, \alpha >0} \frac{\mb{E}_{P\sim\text{Poi}(1)} [\alpha P\ell(\alpha P)]}{\alpha \ell(\alpha)} = \rho
\end{align*}
and the second inequality follows as $\mc{L}$ is closed under abscissa scaling.
\end{proof}

\subsection{Price of Anarchy with and without intervention for polynomial latencies}
\label{sec:poa-comparison}
\begin{table}[h]
\centering
\begin{threeparttable}
\begin{tabular}{ccc}
\toprule
Class of polynomial latency & \multicolumn{2}{c}{Price of Anarchy} \\
\cmidrule(r){2-3}
functions of degree & No Intervention \cite{aland2006exact} & Taxation mechanism [Thm. \ref{thm:recursionpolynomials}] \\
\cmidrule(lr){1-1}
\cmidrule(lr){2-2}
\cmidrule(lr){3-3}
1 & 2.618 & 2 \\
2 & 9.909 & 5 \\
3 & 47.82 & 15 \\
4 & 277 & 52 \\
5 & 1858 & 203 \\
6 & 14099 & 877 \\
7 & 118926 & 4140 \\
8 & 1 101 126 & 21 147 \\
9 & 11 079 429 & 115 975 \\
10 & 120 180 803 & 678 570 \\
\bottomrule
\end{tabular}
\begin{tablenotes}
\small
\item Price of Anarchy in the case of no interventions by \cite{aland2006exact}.
\end{tablenotes}
\end{threeparttable}
\caption{Comparison of the Price of Anarchy with and without interventions}
\label{table:poacomparison}
\end{table}

\end{document}